\newcommand{\ie}{\emph{i.e.}}
\newcommand{\eg}{\emph{e.g.}}
\newcommand{\cf}{\emph{cf.}}
\newcommand{\Real}{\mathbb{R}}
\newcommand{\Nat}{\mathbb{N}}
\newcommand{\supp}{\mathop{\mathrm{supp}}\nolimits}
\newcommand{\Dom}{\mathop{\mathsf{dom}}\nolimits}
\newcommand{\dist}{\mathop{\mathrm{dist}}\nolimits}
\newcommand{\eps}{\varepsilon}
\newcommand{\sii}{L^2}
\newcommand{\der}{\mathrm{d}}
\def\OMIT#1{}
\newtheorem{Theorem}{Theorem}
\newtheorem{Lemma}{Lemma}
\newtheorem{Proposition}{Proposition}
\theoremstyle{definition}
\newtheorem{Remark}{Remark}
\definecolor{DarkGreen}{rgb}{0,0.5,0.1} 
\newcommand\soutD{\bgroup\markoverwith
{\textcolor{DarkGreen}{\rule[.5ex]{2pt}{1pt}}}\ULon}
\newcommand\soutP{\bgroup\markoverwith
{\textcolor{blue}{\rule[.5ex]{2pt}{1pt}}}\ULon}
\newcommand{\Hm}[1]{\leavevmode{\marginpar{\tiny%
$\hbox to 0mm{\hspace*{-0.5mm}$\leftarrow$\hss}%
\vcenter{\vrule depth 0.1mm height 0.1mm width \the\marginparwidth}%
\hbox to
0mm{\hss$\rightarrow$\hspace*{-0.5mm}}$\\\relax\raggedright #1}}}
\begin{document}
%
\title{\textbf{\Large
Spectral analysis of sheared nanoribbons
}}
\author{Philippe Briet,$^{\!a}$ \ Hamza Abdou Soimadou$^{a}$ \ 
and \ David Krej\v{c}i\v{r}{\'\i}k$^{b}$}
\date{\small 
\emph{
\begin{quote}
\begin{enumerate}
\item[$a)$]
Aix-Marseille Univ, Universit\'e de Toulon, CNRS, CPT, Marseille, France;
briet@univ-tln.fr, abdou-soimadou-hamza@etud.univ-tln.fr
\item[$b)$]
Department of Mathematics, Faculty of Nuclear Sciences and 
Physical Engineering, Czech Technical University in Prague, 
Trojanova 13, 12000 Prague~2, Czechia;
david.krejcirik@fjfi.cvut.cz
\end{enumerate}
\end{quote}
}
\smallskip
26 July 2018}
\maketitle
 
%
\begin{abstract}
\noindent
We investigate the spectrum of the Dirichlet Laplacian in a unbounded strip
subject to a new deformation of ``shearing'':
the strip is built by translating a segment oriented 
in a constant direction along an unbounded curve in the plane.
We locate the essential spectrum under the hypothesis that
the projection of the tangent vector of the curve 
to the direction of the segment admits a (possibly unbounded) limit at infinity  
and state sufficient conditions which guarantee the existence 
of discrete eigenvalues.
We justify the optimality of these conditions by establishing
a spectral stability in opposite regimes.
In particular, Hardy-type inequalities are derived 
in the regime of repulsive shearing.
\bigskip
\begin{itemize}
\item[\textbf{Keywords:}]
sheared strips, quantum waveguides, Hardy inequality, Dirichlet Laplacian.
\item[\textbf{MSC (2010):}]
Primary: 35R45; 81Q10; Secondary: 35J10, 58J50, 78A50.
\end{itemize}
\end{abstract}
%

\section{Introduction} 
%
With advances in nanofabrication and measurement science,
waveguide-shaped nanostructures have reached the point at which
the electron transport can be strongly affected by quantum effects.
Among the most influential theoretical results,
let us quote the existence of quantum bound states due to \emph{bending}
in curved strips, 
firstly observed by Exner and \v{S}eba~\cite{ES} in 1989.
The pioneering paper has been followed by a huge number 
of works demonstrating the robustness of the effect 
in various geometric settings including higher dimensions,
and the research field is still active these days
(see~\cite{KTdA2} for a recent paper
with a brief overview in the introduction).

In 2008 Ekholm, Kova\v{r}\'ik and one of the present authors~\cite{EKK}
observed that the geometric deformation of \emph{twisting} 
has a quite opposite effect on the energy spectrum 
of an electron confined to three-dimensional tubes,
for it creates an effectively repulsive interaction 
(see~\cite{K6-with-erratum} for an overview of 
the two reciprocal effects). 
More specifically, twisting the tube locally gives rise 
to Hardy-type inequalities and a stability of quantum transport,
the effect becomes stronger in globally twisted tubes~\cite{BHK}
and in extreme situations it may even annihilate 
the essential spectrum completely~\cite{K11}
(see also \cite{Barseghyan-Khrabustovskyi_2018}).
The repulsive effect remains effective even under modification 
of the boundary conditions~\cite{Briet-Hammedi_2015}.

The objective of this paper is to introduce a new, two-dimensional model 
exhibiting a previously unobserved geometric effect of \emph{shearing}.
Mathematically, it is reminiscent of the effect of twisting 
in the three-dimensional tubes 
(in some aspects it also recalls 
the geometric setting of curved wedges studied in~\cite{K12}), 
but the lower dimensional simplicity
enables us to get an insight into analogous problems left open in~\cite{BHK}
and actually provide a complete spectral picture now. 
The richness of the toy model is reflected in covering 
very distinct regimes, ranging from purely essential 
to purely discrete spectra or a combination of both.
We believe that the present study will stimulate 
further interest in sheared nanostructures. 

The model that we consider in this paper is characterised
by a positive number~$d$ (the transverse width of the waveguide) 
and a differentiable function $f:\Real\to\Real$ 
(the boundary profile of the waveguide).
The waveguide~$\Omega$ is introduced as the set of points in~$\Real^2$
delimited by the curve $x \mapsto (x,f(x))$ 
and its vertical translation $x \mapsto (x,f(x)+d)$, 
namely (see Figure~\ref{figA}), 
\begin{equation}\label{Omega}
  \Omega := \left\{ 
  (x,y) \in \Real^2 : f(x) < y < f(x)+d
  \right\}
  .
\end{equation}

We stress that the geometry of~$\Omega$ 
differs from the curved strips intensively studied
in the literature for the last thirty years
(see~\cite{KKriz} for a review).
In the latter case, the segment $(0,d)$ is translated along 
the curve $x \mapsto (x,f(x))$ with respect to its \emph{normal} vector field
(so that the waveguide is delimited by two \emph{parallel} curves),
while in the present model the translation is with respect to 
the \emph{constant} basis vector in the $y$-direction.

\begin{figure}[h]
\begin{center}
\includegraphics[width=0.9\textwidth]{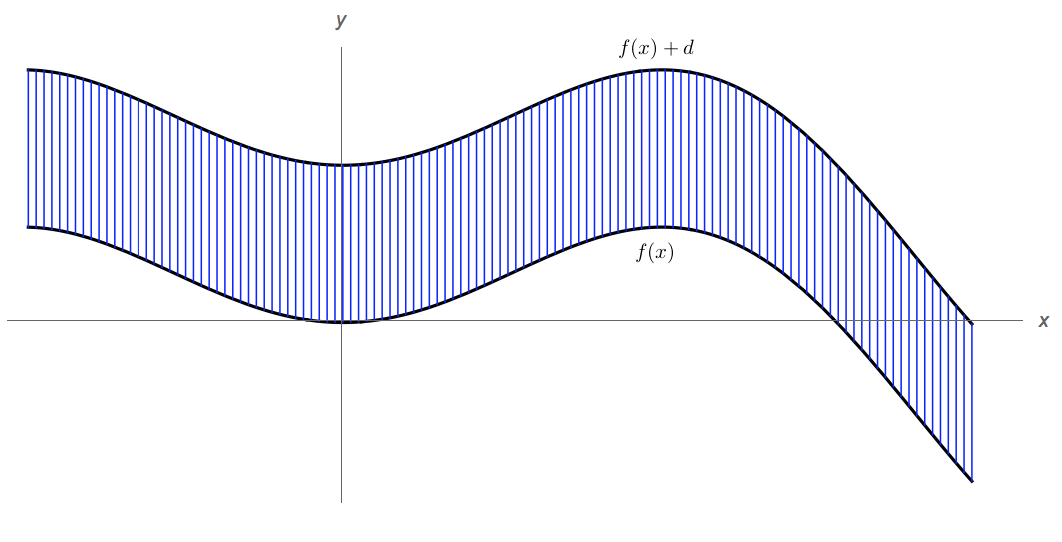}
\caption{The geometry of a sheared waveguide.}\label{figA}
\end{center}
\end{figure}

Clearly, it is rather the derivative~$f'$ that determines the shear	
deformation of the straight waveguide $\Omega_0 := \Real \times (0,d)$.
Our standing assumption is that the derivative~$f'$ 
is locally bounded and that it admits a (possibly infinite) 
limit at infinity:
\begin{equation}\label{Ass}
  \lim_{|x| \to \infty} f'(x) =: \beta \in \Real \cup \{\pm \infty\} 
  .
\end{equation}
If~$\beta$ is finite,
we often denote the deficit 
\begin{equation}\label{eps}
  \eps(x):=f'(x)-\beta
  \,,
\end{equation}
but notice that~$\eps(x)$ is not necessarily small 
for finite~$x$.
	
We consider an effectively free electron constrained to
the nanostructure~$\Omega$ by hard-wall boundaries.
Disregarding physical constants, the quantum Hamiltonian of 
the system can be identified with the Dirichlet Laplacian 
$-\Delta_D^\Omega$ in $\sii(\Omega)$.
The spectrum of the straight waveguide~$\Omega_0$ 
(which can be identified with~$f'=0$ in our model)
is well known; by separation of variables,
one easily conclude that
$\sigma(-\Delta_D^{\Omega_0}) = [(\frac{\pi}{d})^2,\infty)$
and the spectrum is purely absolutely continuous.
The main spectral properties of $-\Delta_D^\Omega$
under the shear deformation obtained in this paper
are summarised in the following theorems.

First, we locate the possible range of energies of propagating states.
\begin{Theorem}[Essential spectrum]\label{Thm.ess}
Let $f' \in L_\mathrm{loc}^\infty(\Real)$ satisfy~\eqref{Ass}.
Then
\begin{equation}\label{ess}
  \sigma_\mathrm{ess}(-\Delta_D^\Omega) = [E_1(\beta),\infty)
  \,, \qquad \mbox{where} \qquad
  E_1(\beta) := (1+\beta^2) \left(\frac{\pi}{d}\right)^2 \,.
\end{equation}
\end{Theorem}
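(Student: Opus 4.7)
My starting point is the shear-unfolding unitary $U:\sii(\Omega)\to\sii(\Omega_0)$, $(Uu)(x,y):=u(x,y+f(x))$, which identifies $-\Delta_D^\Omega$ with the self-adjoint operator $H$ on $\sii(\Omega_0)$ generated by the quadratic form
\[
h[v] \;=\; \int_{\Omega_0}\bigl(|\partial_x v-f'(x)\partial_y v|^2+|\partial_y v|^2\bigr)\,\der x\,\der y,\qquad \Dom(h)=H^1_0(\Omega_0).
\]
The natural comparison is the \emph{frozen-shear} operator $H_\beta$, obtained by replacing $f'$ with the constant $\beta$ in the form. A rotation of $\Real^2$ by angle $\arctan\beta$ identifies $H_\beta$ with the Dirichlet Laplacian on a straight strip of perpendicular width $d/\sqrt{1+\beta^2}$, whence $\sigma(H_\beta)=[E_1(\beta),\infty)$ is purely absolutely continuous. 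After Fourier transform in $x$, $H_\beta$ decomposes as the direct integral of the fibres $\widehat H_\beta(\xi)=(\xi+i\beta\partial_y)^2-\partial_y^2$ on $\sii(0,d)$ with Dirichlet endpoints; the ground-state curve $\xi\mapsto\lambda_1(\beta,\xi)$ is continuous, attains its minimum $E_1(\beta)$, and diverges to $+\infty$ as $|\xi|\to\infty$.

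For the inclusion $\sigma_{\mathrm{ess}}(H)\subseteq[E_1(\beta),\infty)$ when $\beta$ is finite, I would invoke Persson's characterisation
\[
\inf\sigma_{\mathrm{ess}}(H) \;=\; \sup_{R>0}\ \inf\!\left\{\frac{h[v]}{\|v\|^2} : v\in C_c^\infty(\Omega_0),\ \supp v\subset\{|x|>R\}\right\}.
\]
Writing $f'=\beta+\eps$ and expanding the square yields
\[
h[v]-h_\beta[v] \;=\; -2\,\Re\!\int\eps(x)(\partial_x v-\beta\partial_y v)\overline{\partial_y v}\,\der x\,\der y+\int\eps(x)^2|\partial_y v|^2\,\der x\,\der y,
\]
so that Cauchy--Schwarz together with $2|ab|\le a^2+b^2$ gives $h[v]\ge(1-\eta_R)\,h_\beta[v]\ge(1-\eta_R)\,E_1(\beta)\,\|v\|^2$ whenever $\supp v\subset\{|x|>R\}$, with $\eta_R:=\sup_{|x|>R}|\eps(x)|\to 0$ by~\eqref{Ass}. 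Persson then delivers $\inf\sigma_{\mathrm{ess}}(H)\ge E_1(\beta)$.

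For the converse inclusion $[E_1(\beta),\infty)\subseteq\sigma_{\mathrm{ess}}(H)$ (still with $\beta$ finite), I would build a singular Weyl sequence at each prescribed $\lambda\ge E_1(\beta)$. By the properties of the dispersion curve, pick $\xi\in\Real$ and $w\in H^1_0(0,d)$ with $\widehat H_\beta(\xi)\,w=\lambda w$. Setting $v_n(x,y):=\chi_n(x)\,e^{i\xi x}w(y)$ for smooth cutoffs $\chi_n$ with $\|v_n\|=1$, $\supp\chi_n\to+\infty$, and $\|\chi_n'\|+\|\chi_n''\|\to 0$, one checks that $v_n\rightharpoonup 0$ weakly and $\|(H-\lambda)v_n\|\to 0$: the cutoff contributes terms involving derivatives of $\chi_n$ of vanishing norm, and the discrepancy between $H$ and $H_\beta$ on $\supp v_n$ is controlled by $\sup_{\supp\chi_n}|\eps|\to 0$ thanks to~\eqref{Ass}.

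The technically delicate situation is $\beta=\pm\infty$, where the right-hand side of~\eqref{ess} must be read as the empty set and one must prove that $H$ has purely discrete spectrum. The perturbative identity above degenerates and the crude bound $h[v]\ge\|\partial_y v\|^2\ge(\pi/d)^2\|v\|^2$ is far too weak. My plan is to partition $\{|x|>R\}$ into intervals $I_n$ of fixed length on which the oscillation of $f'$ is small, to impose additional Neumann conditions at the endpoints of each $I_n$ (a bracketing that only lowers the form), and then to bound each Neumann restriction from below by a locally frozen form whose lowest eigenvalue exceeds $(1+\inf_{I_n}(f')^2-o(1))(\pi/d)^2\to+\infty$. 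The main obstacle is precisely the uniform control of this local frozen-shear estimate under the weak regularity $f'\in L_{\mathrm{loc}}^\infty(\Real)$, with no derivative information on $f'$ available to smooth out the local rotations.
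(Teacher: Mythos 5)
For finite~$\beta$ your argument is sound and takes a genuinely different route from the paper: you combine a Persson-type characterisation of $\inf\sigma_\mathrm{ess}$ with the estimate $h[v]\geq(1-\eta_R)h_\beta[v]$ on tails, plus explicit Weyl sequences built from the fibre eigenfunctions of $H_\beta$, whereas the paper proves a two-sided ``decomposition principle'' $\sigma_\mathrm{ess}(H_{\beta+\eps})=\sigma_\mathrm{ess}(H_\beta)$ via a quadratic-form version of Weyl's criterion with supports escaping to infinity, and then computes $\sigma(H_\beta)$ by the partial Fourier transform (essentially your fibre analysis). Two caveats for your version: Persson's theorem is usually stated for Schr\"odinger operators, so you should justify it for the form~$h$ (an IMS localisation works here because cutoffs depend on $s$ only, so the commutator with $\partial_s-f'\partial_t$ produces only $\chi'$-terms); and since $f'$ is merely $L_\mathrm{loc}^\infty$, your test functions $v_n=\chi_n e^{i\xi x}w$ need not belong to $\Dom(H)$ (distributionally $Hv_n$ contains $f''\partial_y v_n$), so the statement $\|(H-\lambda)v_n\|\to0$ must be replaced by its form version, i.e.\ convergence in the dual norm of $(\Dom h)^*$ --- exactly the device the paper uses.

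The genuine gap is the case $\beta=\pm\infty$, and you have correctly located it yourself: your bracketing scheme requires intervals of fixed length on which the oscillation of $f'$ is small, and the hypotheses give you nothing of the sort. Assumption~\eqref{Ass} with $\beta=\pm\infty$ only says $\inf_{|x|>R}|f'|\to\infty$; the function $f'$ need not be continuous, and even for smooth profiles such as $f'(x)=x^2$ the oscillation on a fixed-length interval near $x=n$ grows like $n$, so the comparison of $h$ with a locally frozen form on each slab cannot be closed --- the cross term involving $f'-b$ is not small relative to the frozen form, and no derivative information on $f'$ is available to gauge it away. The paper avoids the problem entirely by abandoning the straightened coordinates in this regime and arguing geometrically in the physical domain: one shows (Proposition~\ref{Prop.volume}, via the mean value theorem applied to $f$ between the two boundary curves) that
\begin{equation*}
  \limsup_{\stackrel[\mathrm{x}\in\Omega]{}{|\mathrm{x}|\to\infty}}|B_1(\mathrm{x})\cap\Omega|=0 \,,
\end{equation*}
i.e.\ the strip becomes arbitrarily thin at infinity, and then the Berger--Schechter criterion yields compactness of the embedding $W_0^{1,2}(\Omega)\hookrightarrow L^2(\Omega)$, hence $\sigma_\mathrm{ess}(-\Delta_D^\Omega)=\varnothing$ with no regularity or oscillation control on $f'$ whatsoever. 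To complete your proof you would either need to import such a compactness argument or find a substitute for the missing uniform local estimate; as it stands, the $\beta=\pm\infty$ half of the theorem is not proved.
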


If $\beta = \pm\infty$ (see Figure~\ref{figB}), 
the result means $\sigma_\mathrm{ess}(-\Delta_D^\Omega) = \varnothing$,
so the spectrum is purely discrete 
and there are no propagating states. 
In this case, the distance of points $\mathrm{x} \in \Omega$
to the boundary~$\partial\Omega$ tends to zero as $|\mathrm{x}| \to \infty$,
so~$\Omega$ is a quasi-bounded domain (\cf~Remark~\ref{Rem.q-bounded}).
This phenomenon is reminiscent of three-dimensional waveguides
with asymptotically diverging twisting 
\cite{K11,Barseghyan-Khrabustovskyi_2018}.

\begin{figure}[h]
\begin{center}
\includegraphics[width=0.9\textwidth]{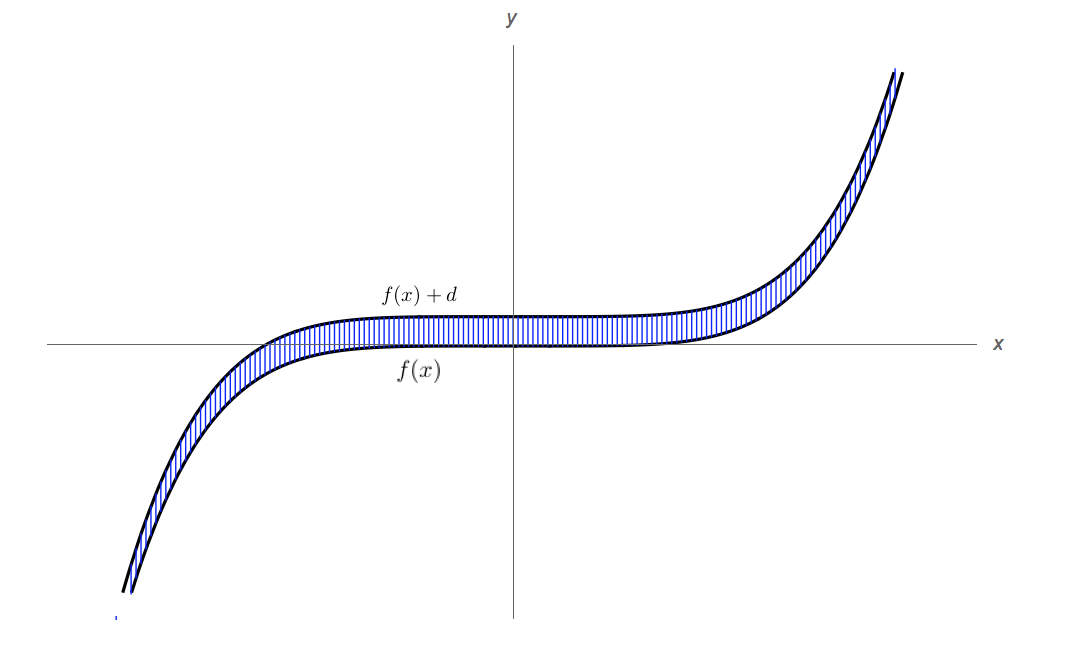}
\caption{A sheared waveguide with $\beta=+\infty$.}\label{figB}
\end{center}
\end{figure}

Our next concern is about the possible existence or absence
of discrete eigenvalues below $E_1(\beta)$ if~$\beta$ is finite.
The following theorem together with Theorem~\ref{Thm.ess}
provides a sufficient condition for the existence.

\begin{Theorem}[Attractive shearing]\label{Thm.disc}
Let $f' \in L_\mathrm{loc}^\infty(\Real)$ 
be such that $f'^2 - \beta^2 \in L^1(\Real)$.
Then either of the following conditions
\begin{enumerate}
\item[\emph{(i)}]
$
\displaystyle
  \int_{\Real} (f'^2(x) - \beta^2) \, \der x < 0
  \,;
$
\item[\emph{(ii)}]
$
\displaystyle
  \int_{\Real} (f'^2(x) - \beta^2) \, \der x = 0
$,
$f'$~is not constant 
and $f'' \in L_\mathrm{loc}^1(\Real) \,;$
\end{enumerate}
implies 
\begin{equation}\label{less}
  \inf\sigma(-\Delta_D^\Omega) < E_1(\beta)
  \,.  
\end{equation}
\end{Theorem}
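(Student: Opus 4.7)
The plan is to apply the min-max principle after straightening $\Omega$ to the reference strip $\Omega_0 := \Real \times (0,d)$ via the measure-preserving diffeomorphism $(x,y) \mapsto (x,\,t := y-f(x))$. This induces a unitary equivalence between $-\Delta_D^\Omega$ and the operator on $\sii(\Omega_0)$ associated with the quadratic form
\begin{equation*}
  \tilde q[\tilde\psi] := \int_{\Omega_0}
  \bigl[ (\partial_x\tilde\psi - f'(x)\,\partial_t\tilde\psi)^2 + (\partial_t\tilde\psi)^2 \bigr]\, \der x\,\der t
\end{equation*}
on $H_0^1(\Omega_0)$. Denoting by $\chi_k(t) := \sqrt{2/d}\,\sin(k\pi t/d)$ the $\sii(0,d)$-normalised Dirichlet eigenfunctions of $-\partial_t^2$, a direct computation for the factorised ansatz $\tilde\psi = \varphi(x)\chi_1(t)$, using $\int_0^d \chi_1\chi_1'\,\der t = 0$, gives
\begin{equation*}
  \tilde q[\varphi\chi_1] - E_1(\beta)\,\|\varphi\chi_1\|^2
  = \int_{\Real} \varphi'(x)^2\,\der x
  + \left(\frac{\pi}{d}\right)^{\!2}\!\int_{\Real} \bigl(f'(x)^2 - \beta^2\bigr)\varphi(x)^2\,\der x.
\end{equation*}

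For case~(i), I would use the standard cut-off sequence $\varphi_n(x) := \varphi_0(x/n)$ with $\varphi_0 \in C_c^\infty(\Real)$ satisfying $\varphi_0 \equiv 1$ on $[-1,1]$. Rescaling yields $\int \varphi_n'^2\,\der x \to 0$, while dominated convergence (relying on $f'^2-\beta^2 \in \si(\Real)$) gives $\int (f'^2-\beta^2)\varphi_n^2\,\der x \to \int(f'^2-\beta^2)\,\der x < 0$. Hence the displayed quantity is strictly negative for $n$ large, and \eqref{less} follows by min-max.

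The main obstacle is case~(ii), where the same trial sequence only drives the right-hand side to~$0$. My plan is to exploit the coupling between the first two transverse modes produced by the off-diagonal term $-f'\partial_t$, enriching the trial function to
\begin{equation*}
  \tilde\psi_{n,c} := \varphi_n\chi_1 + c\,g(x)\,\chi_2(t),
  \qquad c \in \Real,\ g \in C_c^\infty(\Real),
\end{equation*}
with $g$ and $c$ to be chosen. The orthogonalities $\int_0^d \chi_1\chi_2\,\der t = \int_0^d \chi_1'\chi_2'\,\der t = 0$ kill all cross contributions in $\|\cdot\|^2$ and most in $\tilde q$; the only surviving interaction is weighted by the coefficient $A_{12} := \int_0^d \chi_1\chi_2'\,\der t = -\tfrac{8}{3d} \neq 0$. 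For $n$ so large that $\varphi_n \equiv 1$ on $\supp g$, a straightforward expansion then yields
\begin{equation*}
  \tilde q[\tilde\psi_{n,c}] - E_1(\beta)\,\|\tilde\psi_{n,c}\|^2
  = o(1) + 2c\,A_{12}\int_{\Real} f'(x)\,g'(x)\,\der x + c^2 K(g),
\end{equation*}
where $o(1) \to 0$ as $n\to\infty$ by the case~(ii) argument applied to $\varphi_n\chi_1$, and $K(g)$ is independent of~$n$. Integration by parts, legitimate because $f'' \in L^1_\mathrm{loc}(\Real)$ and $g$ has compact support, rewrites the linear coefficient as $-2c\,A_{12}\int f''g\,\der x$. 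Since $f'$ is not constant, $f''$ does not vanish in $\mathcal{D}'(\Real)$, so I can choose $g$ with $\int f''g\,\der x \neq 0$; then selecting $c$ of the appropriate sign and sufficiently small magnitude makes the linear term negative and dominate the bounded $c^2 K(g)$ contribution. Min-max then yields \eqref{less} in case~(ii) as well.
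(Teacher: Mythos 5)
Your argument is correct, and for condition (ii) it takes a genuinely different route from the paper. For condition (i) the two proofs coincide in substance: the paper uses the trial functions $\varphi_n\chi_1$ and organises the computation through the ground-state decomposition of Lemma~\ref{Lem.fine}, but the resulting quantity $\int_\Real\varphi_n'^2 + E_1\int_\Real(f'^2-\beta^2)\varphi_n^2$ is exactly your display, and the dominated-convergence conclusion is the same. For condition (ii) the paper perturbs by $\delta\,\xi(s)\,t\,\chi_1(t)$ and, to show that the resulting cross term cannot vanish for every $\xi$, must identify the Riccati equation $-\eps' + E_1 d\,(\eps^2+2\beta\eps)=0$, solve it explicitly and check that its solutions are incompatible with the hypotheses; you instead couple to the second transverse mode via $c\,g(x)\chi_2(t)$, and your orthogonality bookkeeping is right: the only surviving interaction is $2cA_{12}\int_\Real f'g'$ with $A_{12}=\int_0^d\chi_1\chi_2'\,\der t=-8/(3d)$, the quadratic term $c^2K(g)$ is $n$-independent, and the pure $\varphi_n\chi_1$ part is $o(1)$ under (ii). Since $\int_\Real f'g'=-\langle f'',g\rangle$, its non-vanishing for some $g\in C_c^\infty(\Real)$ is precisely the statement that $f'$ is non-constant, so no ODE analysis is needed; the small-$|c|$ sign choice and the final passage $n\to\infty$ are sound (only the negativity of the numerator matters, the growth of $\|\tilde\psi_{n,c}\|^2$ is irrelevant). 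A bonus of your version worth noting: the pairing $\int_\Real f'g'$ makes sense with $f''$ understood distributionally, so the extra regularity hypothesis $f''\in L^1_\mathrm{loc}(\Real)$ — whose necessity the paper explicitly leaves open — is dispensable in your argument. One minor caveat: since the theorem only assumes $f'\in L^\infty_\mathrm{loc}(\Real)$, the form domain need not equal $H_0^1(\Omega_0)$; but your trial functions are compactly supported, hence belong to the form core (Proposition~\ref{Prop.core}), so this does not affect the proof.
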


The theorem indeed implies the existence of discrete spectra
under the additional hypothesis~\eqref{Ass},
because the inequality~\eqref{less}
and the fact that, due to Theorem~\ref{Thm.ess},
the essential spectrum of $-\Delta_D^\Omega$ starts by $E_1(\beta)$
ensure that the lowest point in the spectrum of $-\Delta_D^\Omega$
is an isolated eigenvalue of finite multiplicity.
Theorem~\ref{Thm.disc} is an analogy of~\cite{EKov_2005},
where it was shown that a local ``slow down'' of twisting 
in periodically twisted tubes generates discrete eigenvalues
(see also \cite{Briet-Kovarik-Raikov-Soccorsi_2009}).
It is not clear from our proof whether the extra regularity
assumption in the condition~(ii) is necessary. 

Recalling~\eqref{eps},
notice that $f'^2-\beta^2 = \eps^2 + 2 \beta \eps$,
so a necessary condition to satisfy either of 
the conditions~(i) or~(ii) of Theorem~\ref{Thm.disc}
is that the function $\beta \eps$ is not non-negative.  
This motivates us to use the terminology that 
the shear is \emph{attractive} (respectively, \emph{repulsive})
if~$\beta\eps$ is non-positive (respectively, non-negative)
and~$\eps$ is non-trivial.
In analogy with the conjectures stated in~\cite{BHK},
it is expected that the discrete spectrum of $-\Delta_D^\Omega$
is empty provided that the shear is either repulsive 
or possibly attractive but $\eps^2 \gg -\beta\eps$. 
The latter is confirmed in the following special setting
(see Figure~\ref{figC}).

\begin{Theorem}[Strong shearing]\label{Thm.schema}
Let $f'(x) = \beta + \alpha \eps(x)$, where 
$\alpha, \beta \in \Real$ and $\eps:\Real \to \Real$ 
is a function such that
$\supp \eps \subset [0,1]$ and $c_1 \leq \eps(x) \leq c_2$
for all $x\in[0,1]$ with some positive constants $c_1,c_2$.
Then there exists a positive number~$\alpha_0$ such that,
for all $|\alpha| \geq \alpha_0$,
\begin{equation}\label{stability}
  \sigma(-\Delta_D^\Omega) = [E_1(\beta),\infty)
  \,.  
\end{equation}
\end{Theorem}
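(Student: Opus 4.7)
After the straightening $u(x,s):=\psi(x,s+f(x))$, the Dirichlet Laplacian $-\Delta_D^\Omega$ is unitarily equivalent to the self-adjoint operator $H_f$ on $L^2(\Omega_0)$ associated with the closed quadratic form
$$Q_f[u]=\int_{\Omega_0}\bigl(|u_x-f'(x)u_s|^2+|u_s|^2\bigr)\,\mathrm{d}x\,\mathrm{d}s,\qquad u\in H^1_0(\Omega_0).$$
Since $\supp\eps\subset[0,1]$ forces $f'\to\beta$ at $\pm\infty$, Theorem~\ref{Thm.ess} yields $\sigma_{\mathrm{ess}}(-\Delta_D^\Omega)=[E_1(\beta),\infty)$. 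Combined with the trivial inclusion $\inf\sigma(H_f)\leq\inf\sigma_{\mathrm{ess}}(H_f)=E_1(\beta)$, the identity~\eqref{stability} is therefore equivalent to proving the form inequality $Q_f[u]\geq E_1(\beta)\|u\|^2$ for every $u\in H^1_0(\Omega_0)$ whenever $|\alpha|\geq\alpha_0$ for some threshold $\alpha_0>0$.

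The plan is to project $u$ onto the lowest transverse Dirichlet mode $\phi_1(s):=\sqrt{2/d}\sin(\pi s/d)$ and treat parallel and orthogonal parts separately. Writing $c(x):=\langle u(x,\cdot),\phi_1\rangle_{L^2(0,d)}$, $v:=u-c\phi_1$ (so $v(x,\cdot)\perp\phi_1$), a direct $s$-integration using $\int\phi_1^2=1$, $\int(\phi_1')^2=(\pi/d)^2$ and $\int\phi_1\phi_1'=0$ gives
$$Q_f[c\phi_1]=\int_{\Real}\Bigl(|c'(x)|^2+\bigl(1+f'(x)^2\bigr)(\pi/d)^2|c(x)|^2\Bigr)\,\mathrm{d}x.$$
Without loss of generality $\alpha\geq0$ (symmetry $f\mapsto -f$). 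For $\alpha\geq\max(0,-2\beta/c_1)$ one checks $f'(x)^2\geq\beta^2$ for all $x\in\Real$, with strict inequality on $[0,1]$. Consequently $Q_f[c\phi_1]\geq E_1(\beta)\|c\phi_1\|^2$ with positive margin $\delta(\alpha)\|c\|_{L^2([0,1])}^2$, $\delta(\alpha):=\min_{x\in[0,1]}(f'(x)^2-\beta^2)(\pi/d)^2$, which grows like $\alpha^2$. For the orthogonal part, the transverse Poincar\'e inequality $\int|v_s|^2\,\mathrm{d}s\geq(2\pi/d)^2\|v\|^2$ yields $Q_f[v]\geq(2\pi/d)^2\|v\|^2\geq E_1(\beta)\|v\|^2$ whenever $|\beta|\leq\sqrt3$; for $|\beta|>\sqrt3$ one iterates the transverse Fourier decomposition over finitely many modes $\phi_1,\dots,\phi_N$ until $(N\pi/d)^2\geq E_1(\beta)$.

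The hard part is the cross contribution $Q_f[u]-Q_f[c\phi_1]-Q_f[v]$. Integrating in $s$ and using $\int\phi_1 v=0$ together with its differential consequences $\int\phi_1 v_x=0$, $\int\phi_1 v_s=-\tilde c$, $\int\phi_1'v_x=\tilde c'$, $\int\phi_1'v_s=0$, where $\tilde c(x):=\langle v(x,\cdot),\phi_1'\rangle$, one finds that this cross contribution reduces to
$$2\Re\int_{\Real}\bigl(c'(x)f'(x)\overline{\tilde c(x)}-c(x)f'(x)\overline{\tilde c'(x)}\bigr)\,\mathrm{d}x.$$
The Cauchy--Schwarz estimates $\|\tilde c\|_{L^2}\leq(\pi/d)\|v\|$ and $\|\tilde c'\|_{L^2}^2\leq(\pi/d)^2\int|v_x|^2$ bound these bilinear terms by quantities of order $|\alpha|\,\|c\|_{H^1}\,\|v\|_{H^1}$, and the task is to absorb them into the strictly positive margins $\delta(\alpha)\|c\|_{L^2([0,1])}^2$ coming from the parallel part and $((2\pi/d)^2-E_1(\beta))\|v\|^2$ from the orthogonal part via Young's inequality with a parameter tuned against~$\alpha$. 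This quantitative absorption, exploiting the $\alpha^2$ growth of $\delta(\alpha)$ against the linear-in-$\alpha$ cross contribution, is the technical heart of the argument and fixes the threshold~$\alpha_0$; I expect the most delicate step to be the control of the $cf'\overline{\tilde c'}$ piece for large $|\beta|$, where the iterated transverse decomposition interacts non-trivially with the cross terms.
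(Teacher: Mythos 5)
Your reduction (essential spectrum via Theorem~\ref{Thm.ess} plus a form bound $Q_f[u]\geq E_1(\beta)\|u\|^2$ for large $|\alpha|$) and your diagonal computations are correct, and your formula for the cross contribution $2\Re\int_\Real f'\,(c'\overline{\tilde c}-c\,\overline{\tilde c'})\,\der x$ is right. But the step you defer as the ``technical heart'' is precisely where the argument breaks, and as sketched it cannot be closed by Young's inequality. First, the cross term carries the constant part $\beta$ of $f'$ on \emph{all} of $\Real$, in particular outside $[0,1]$, where your only margins are $\|c'\|^2$ and the fixed transverse gap $\big((2\pi/d)^2-E_1(\beta)\big)\|v\|^2$; any Cauchy--Schwarz treatment of this global piece gives away a multiple of $\beta^2\|c\|^2$ or $\beta^2\|v\|^2$ for which no margin exists (already for the unperturbed shear $f'\equiv\beta$ the constant $E_1(\beta)$ is sharp, and the naive estimate only survives for small $\beta^2$, roughly $\beta^2\lesssim 3/5$). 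This $\beta$-coupling must be handled \emph{exactly} --- that is what the ground-state substitution $\beta\chi_1\partial_t(\chi_1^{-1}\psi)$ in Lemma~\ref{Lem.fine}, or the Fourier analysis of Proposition~\ref{Prop.Fourier}, achieves --- and your projection onto Dirichlet modes does not diagonalise it; retaining finitely many modes for $|\beta|>\sqrt3$ only creates more $\beta$-proportional couplings among the retained modes, with no gap at all to absorb them. Second, even the localised part fails for large $|\alpha|$: on $[0,1]$ the piece $2\Re\int f'c'\overline{\tilde c}$ is of size $|\alpha|\,\|c'\|_{L^2(0,1)}\|v\|$, while the retained margins in these variables, $\|c'\|^2$ and the $O(1)$ gap times $\|v\|^2$, are $\alpha$-independent; Young's inequality then needs $\alpha^2\lesssim\mathrm{const}$, i.e.\ it fails exactly in the regime you must cover. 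Your $\alpha^2$ margin $\delta(\alpha)\|c\|_{L^2(0,1)}^2$ sits on $\|c\|$, not on $\|c'\|$ or $\|v\|$, and shifting the derivative by an integration by parts would require $\eps'$, which is not assumed (and the piece with $\overline{\tilde c'}$ is controlled by $\|v_x-f'v_s\|$, which you discarded; keeping it still leaves an unabsorbable $\beta^2\int|c|^2$ remainder).

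For comparison, the paper does not attempt any such quadratic-form absorption. It works in the original Cartesian coordinates and, after reducing to $\beta>0$, $\alpha\to-\infty$, uses Neumann bracketing: $\Omega$ is cut into two exterior pieces congruent to half-strips of width $d/\sqrt{1+\beta^2}$ (threshold exactly $E_1(\beta)$), a thin parallelogram over the sheared region whose horizontal width is $O(|\alpha|^{-1})$, so its Dirichlet--Neumann threshold diverges like $\alpha^2$, and two transitional quadrilaterals converging to right triangles whose mixed eigenvalue is bounded below by $(1+\beta^{-2})E_1(\beta)>E_1(\beta)$. That geometric collapse is what makes the large-$|\alpha|$ regime tractable and avoids cross terms entirely. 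If you insist on a form-theoretic proof, you should at least start from the identity of Lemma~\ref{Lem.fine} rather than the naive mode projection; but then, in the attractive case $\alpha\beta<0$, you face the negative term $\alpha\beta\int\eps\,[E_1+(\chi_1'/\chi_1)^2]\,|\psi|^2$, which grows linearly in $|\alpha|$ with a boundary-divergent weight, and dominating it by the two square terms is a genuine difficulty that your sketch does not address. As it stands, the proposal is a plan with the decisive estimate missing, not a proof.
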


Finally, we establish Hardy-type inequalities 
in the case of repulsive shear.

\begin{Theorem}[Repulsive shearing]\label{Thm.Hardy}
Let $f' \in L_\mathrm{loc}^\infty(\Real)$.
If $\beta\eps \geq 0$ and~$\eps$ is non-trivial, 
then there exists a positive constant~$c$ such that the inequality
\begin{equation}\label{Hardy}
  -\Delta_D^\Omega - E_1(\beta) \geq \frac{c}{1+x^2}
\end{equation}
holds in the sense of quadratic forms in $\sii(\Omega)$.
\end{Theorem}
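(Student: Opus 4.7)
My plan is to straighten the strip, separate the transverse ground state, and reduce the problem to a one-dimensional Hardy inequality with an explicit non-negative potential. I will begin with the unitary transformation $(\mathcal{U}\psi)(x,t):=\psi(x,t+f(x))$ from $\sii(\Omega)$ onto $\sii(\Omega_0)$, where $\Omega_0:=\Real\times(0,d)$, which preserves the coordinate $x$. In the new variables, $-\Delta_D^\Omega$ is unitarily equivalent to the operator associated with the quadratic form $q[\tilde\psi]:=\int_{\Omega_0}(|\partial_x\tilde\psi-f'\partial_t\tilde\psi|^2+|\partial_t\tilde\psi|^2)\,\der x\,\der t$ on $\mathcal{D}(q)=H^1_0(\Omega_0)$. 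Expanding the first square and using $1+f'^2=(1+\beta^2)+(\eps^2+2\beta\eps)$, the shifted form reorganizes as
\[
q[\tilde\psi] - E_1(\beta)\|\tilde\psi\|^2 = \int_{\Omega_0}\bigl(|\partial_x\tilde\psi|^2 - 2f'\,\mathrm{Re}(\partial_x\tilde\psi\,\overline{\partial_t\tilde\psi})\bigr)\,\der x\,\der t + \int_{\Omega_0}(1+f'^2)\bigl(|\partial_t\tilde\psi|^2 - (\pi/d)^2|\tilde\psi|^2\bigr)\,\der x\,\der t + \int_{\Omega_0} V(x)|\tilde\psi|^2\,\der x\,\der t,
\]
where $V(x):=(\pi/d)^2(\eps^2(x)+2\beta\eps(x))$. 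Under the repulsive hypothesis $\beta\eps\ge 0$ and the non-triviality of $\eps$, the potential $V$ is non-negative and not identically zero.

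Next, I will decompose $\tilde\psi(x,t)=\varphi(x)\chi_1(t)+\phi^\perp(x,t)$, where $\chi_1(t):=\sqrt{2/d}\sin(\pi t/d)$ is the normalized Dirichlet ground state on $(0,d)$ and $\int_0^d \chi_1(t)\phi^\perp(x,t)\,\der t=0$ for a.e.\ $x$. The identity $\int_0^d\chi_1\chi_1'\,\der t=0$ makes the ``diagonal'' (pure ground-state) contribution to the cross term $-2\int f'\,\mathrm{Re}(\partial_x\tilde\psi\,\overline{\partial_t\tilde\psi})\,\der x\,\der t$ vanish, leaving only off-diagonal pieces coupling $\varphi$ to $\phi^\perp$, and a pure $\phi^\perp$ contribution. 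Using the pointwise bounds $|H(x)|\le(\pi/d)\|\phi^\perp(x,\cdot)\|_{L^2(0,d)}$ with $H(x):=\int_0^d\chi_1'(t)\phi^\perp(x,t)\,\der t$ (and the analogous bound for $H'$), the transverse spectral gap $\int_0^d(|\partial_t\phi^\perp|^2-(\pi/d)^2|\phi^\perp|^2)\,\der t\ge 3(\pi/d)^2\int_0^d|\phi^\perp|^2\,\der t$, and Cauchy--Schwarz inequalities tuned by a small parameter $\eta\in(0,1)$, I expect to derive the decoupled lower bound
\[
q[\tilde\psi]-E_1(\beta)\|\tilde\psi\|^2 \geq (1-\eta)\int_{\Real}\bigl(|\varphi'|^2+V|\varphi|^2\bigr)\,\der x + c_0\int_{\Omega_0}|\phi^\perp|^2\,\der x\,\der t
\]
for some $c_0>0$.

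It then remains to invoke the classical one-dimensional Hardy-type inequality: for any $V\in L^1_{\mathrm{loc}}(\Real)$ with $V\ge 0$ and $V\not\equiv 0$ there exists $c_1>0$ such that $\int_\Real(|\varphi'|^2+V|\varphi|^2)\,\der x\ge c_1\int_\Real|\varphi|^2/(1+x^2)\,\der x$ for every $\varphi\in H^1(\Real)$. Applying this to the ground-mode part, combining with the trivial estimate $\int|\phi^\perp|^2/(1+x^2)\,\der x\,\der t\le\int|\phi^\perp|^2\,\der x\,\der t$, and using the orthogonal split $\int|\tilde\psi|^2/(1+x^2)\,\der x\,\der t=\int|\varphi|^2/(1+x^2)\,\der x+\int|\phi^\perp|^2/(1+x^2)\,\der x\,\der t$, I will obtain \eqref{Hardy} with $c:=\min\{(1-\eta)c_1,\,c_0\}>0$; since $\mathcal U$ preserves $x$, the bound transfers directly back to $-\Delta_D^\Omega$ on $\Omega$.

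The hardest step will be the Cauchy--Schwarz absorption in the second paragraph: the off-diagonal cross terms are products of $\varphi'$, $H$, and their $x$-derivatives with the factor $f'$, and must be compensated jointly by the $|\varphi'|^2$, $V|\varphi|^2$, and spectral-gap terms. A delicate point is that $f'$ is only \emph{locally} bounded in the statement of the theorem; the estimate must therefore be organised so that the coupling constants involve only the pointwise expressions $f'(x)^2$, $(\eps^2+2\beta\eps)(x)$, and the fixed transverse eigenvalues, avoiding any global $L^\infty$-norm of $f'$. The auxiliary one-dimensional Hardy inequality invoked in the third paragraph is standard in the waveguide literature and can be proved by a concentration-compactness argument based on the compact embedding of $H^1(\Real)$ into the weighted space $L^2(\Real;(1+x^2)^{-1}\,\der x)$.
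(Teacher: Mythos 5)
Your central intermediate inequality is false as stated whenever $\beta\neq 0$, and this is the crux of the proposal. Indeed, take $\eps$ non-trivial with compact support (so the hypotheses of the theorem hold) and test functions supported where $f'\equiv\beta$ and $V\equiv 0$, of the form $\tilde\psi_n(x,t)=e^{i\xi x}g_n(x)\,e^{i\xi\beta t/(1+\beta^2)}\chi_1(t)$ with $g_n$ spreading out. Then $q[\tilde\psi_n]-E_1(\beta)\|\tilde\psi_n\|^2\sim\frac{\xi^2}{1+\beta^2}\|\tilde\psi_n\|^2$, while the $\chi_1$-projection gives $\int_\Real|\varphi_n'|^2\sim|c_\xi|^2\xi^2\|g_n\|^2$ with $|c_\xi|^2=1-O(\xi^2)$ and $\|\phi_n^\perp\|^2\sim\mathrm{const}\,\beta^2\xi^2\|g_n\|^2$. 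Dividing by $\xi^2\|g_n\|^2$ and letting $\xi\to0$ forces $1-\eta\leq\frac{1}{1+\beta^2}$, so no decoupled bound with coefficient ``$1-\eta$, $\eta$ small'' (nor, a fortiori, with any coefficient exceeding $\frac{1}{1+\beta^2}$) can hold, for any $c_0>0$. The structural reason is that near the bottom of the essential spectrum the transverse profile is not $\chi_1$ but the gauged function $e^{i\xi\beta t/(1+\beta^2)}\chi_1(t)$; projecting onto $\chi_1$ misidentifies the longitudinal kinetic energy when $\beta\neq0$. This is precisely what the correction term $\beta\,\chi_1\partial_t(\chi_1^{-1}\psi)$ inside the longitudinal derivative in the paper's ground-state decomposition (Lemma~\ref{Lem.fine}) is designed to capture.

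Even if you weaken the coefficient to a small $\kappa>0$ (which would still feed your one-dimensional lemma), the proposed Cauchy--Schwarz bookkeeping has a concrete budget problem because only $f'\in L^\infty_{\mathrm{loc}}$ is assumed: absorbing $-2\int f'\,\mathrm{Re}(\partial_x\phi^\perp\overline{\partial_t\phi^\perp})$ for large $|f'|$ consumes essentially all of $\int|\partial_x\phi^\perp|^2$ and all of $\int f'^2|\partial_t\phi^\perp|^2$, after which the leftover mass term is $E_1(3-\beta^2)\|\phi^\perp\|^2$ (already negative for $|\beta|\geq\sqrt3$), and nothing remains to absorb the couplings $2\int f'\mathrm{Re}(\varphi'\overline H)$ and $-2\int f'\mathrm{Re}(\bar\varphi\,\partial_x H)$ with $H(x)=\int_0^d\chi_1'\phi^\perp\,\der t$; the latter cannot be integrated by parts in $x$ (since $\eps$ need not be differentiable) and has no global $|\varphi|^2$ term to be set against where $\eps=0$. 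The paper circumvents all of this: Lemma~\ref{Lem.fine} yields an exact identity whose longitudinal term is the twisted derivative $\|\partial_s\psi-\eps\partial_t\psi-\beta\chi_1\partial_t(\chi_1^{-1}\psi)\|^2$; the local positivity on a window is obtained spectrally ($\lambda_I>0$, Lemmas~\ref{Lem.compact}--\ref{Lem.lambda}) rather than by extracting a global Schr\"odinger form $\int(|\varphi'|^2+V|\varphi|^2)$; and the weight $(s-s_0)^{-2}$ comes from the one-dimensional Hardy inequality of Lemma~\ref{Lem.1DHardy} applied to that twisted term, with a constant depending only on $\beta$, followed by a cutoff and interpolation. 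Two smaller defects: for unbounded $f'$ the form domain need not be $H^1_0(\Omega_0)$ (only $C_0^\infty(\Omega_0)$ is a core, Proposition~\ref{Prop.core}), and your compactness sketch for the auxiliary 1D Hardy lemma is incomplete, since a minimising sequence normalised in $L^2(\Real;(1+x^2)^{-1}\der x)$ need not be bounded in $H^1(\Real)$.
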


Consequently, if the shear satisfies~\eqref{Ass}
in addition to the repulsiveness,
then~\eqref{Hardy} and Theorem~\ref{Thm.ess} imply 
that the stability result~\eqref{stability} holds,
so in particular there are no discrete eigenvalues.
Moreover, the spectrum is stable against small perturbations
(the smallness being quantified by the Hardy weight
on the right-hand side of~\eqref{Hardy}).
From Theorem~\ref{Thm.Hardy} it is also clear that~$\alpha$
does not need to be large if $\alpha\beta \geq 0$
in the setting of Theorem~\ref{Thm.schema}.
 
The organisation of the paper is as follows.
In Section~\ref{Sec.pre} we introduce natural curvilinear coordinates
to parameterise~$\Omega$ and express the Dirichlet Laplacian in them.
Hardy-type inequalities (and in particular Theorem~\ref{Thm.Hardy}) 
are established in Section~\ref{Sec.Hardy}. 
The essential spectrum (Theorem~\ref{Thm.ess}) 
is l in Section~\ref{Sec.ess}.
The sufficient conditions of Theorem~\ref{Thm.disc}
(which in particular imply the existence of discrete eigenvalues
in the regime~\eqref{ess})
are established in Section~\ref{Sec.disc}.
Finally, Theorem~\ref{Thm.schema} is proved in Section~\ref{Sec.schema}.

\section{Preliminaries}\label{Sec.pre}
%
Our strategy to deal with the Laplacian in the sheared geometry~$\Omega$
is to express it in natural curvilinear coordinates. 
It employs the identification 
$
  \Omega = \mathscr{L}(\Omega_0)
$,
where $\mathscr{L}:\Real^2\to\Real^2$ is the shear mapping defined by
\begin{equation}\label{tube}
  \mathscr{L}(s,t) := \big(s,f(s)+t\big)
  .
\end{equation}
In this paper we consistently use the notations~$s$ and~$t$
to denote the ``longitudinal'' and ``transversal'' 
variables in the straight waveguide~$\Omega_0$, respectively. 
The corresponding metric reads
\begin{equation}
  G := \nabla\mathscr{L} \cdot (\nabla\mathscr{L})^T = 
  \begin{pmatrix}
    1+f'^2 & f' 
    \\
    f' & 1
  \end{pmatrix}
  , \qquad
  |G| := \det(G) = 1
  ,
\end{equation}
where the dot denotes the scalar product in~$\Real^2$.
Recall our standing assumption 	that $f' \in L_\mathrm{loc}^\infty(\Real)$.
It follows that $\mathscr{L}:\Omega_0 \to \Omega$ is a local diffeomorphism.
In fact, it is a global diffeomorphism, because~$\mathscr{L}$ is injective.
In this way, we have identified~$\Omega$ 
with the Riemannian manifold $(\Omega_0,G)$.

The Dirichlet Laplacian $-\Delta_D^\Omega$ is introduced standardly
as the self-adjoint operator in the Hilbert space $\sii(\Omega)$
associated with the quadratic form $Q_D^\Omega[u] := \int_\Omega |\nabla u|^2$,
$\Dom Q_D^\Omega := H_0^1(\Omega)$. 
Employing the diffeomorphism $\mathscr{L}:\Omega_0 \to \Omega$,
we have the unitary transform
$$
  \mathcal{U} : \sii(\Omega) \to \sii(\Omega_0) :
  \{u \mapsto u \circ \mathscr{L} \}
  \,.
$$
Consequently, $-\Delta_D^\Omega$ is unitarily equivalent 
(and therefore isospectral)
to the operator $H := \mathcal{U}(-\Delta_D^\Omega)\mathcal{U}^{-1}$
in the Hilbert space $\sii(\Omega_0)$.
By definition, the operator~$H$ is associated with the quadratic form
$$  
  h[\psi] := Q_D^\Omega[\mathcal{U}^{-1}\psi] 
  \,, \qquad
  \Dom h := \mathcal{U} \Dom Q_D^\Omega
  \,.	
$$
Occasionally, we shall emphasise the dependence of~$h$ and~$H$ on~$f'$
by writing~$f'$ as the subscript, 
\ie\ $h=h_{f'}$ and $H=H_{f'}$.

Given $u \in C_0^\infty(\Omega)$, a core of $Q_D^\Omega$,
it follows that $\psi := u \circ \mathscr{L}$
is also compactly supported and $\psi \in H_0^1(\Omega_0)$.
Then it is straightforward to verify that
\begin{equation}\label{form}
  h[\psi] = \|\partial_s\psi-f'\partial_t\psi\|^2 + \|\partial_t\psi\|^2
  \,,
\end{equation}
where $\|\cdot\|$ denotes the norm of $\sii(\Omega_0)$
and, with an abuse of notation, we denote by the same symbol~$f'$
the function $f' \otimes 1$ on $\Real\times(0,d)$.
We have the elementary bounds
\begin{multline}\label{elementary}
  \delta \|\partial_s\psi\|^2 
  + \left(1-\frac{\delta}{1-\delta}\|f'\|_{L^\infty(\supp\psi)}^2\right) 
  \|\partial_t\psi\|^2
  \\
  \leq h[\psi] \leq
  2 \|\partial_s\psi\|^2 
  + \left(1 + 2 \|f'\|_{L^\infty(\supp\psi)}^2\right) \|\partial_t\psi\|^2
\end{multline}
valid for every $\delta \in (0,1)$.	
The domain of~$h$ coincides with the closure of the set 
of the functions~$\psi$ specified above	
with respect to the graph topology of~$h$.
By a standard mollification argument using~\eqref{elementary}, 	
it follows that~$C_0^\infty(\Omega_0)$ is a core of~$h$. 	
Moreover, it follows from~\eqref{elementary} that
if~$f'$ is bounded, then the graph topology of~$h$
is equivalent to the topology of $H^1(\Omega_0)$,
and therefore $\Dom h = H_0^1(\Omega_0)$ in this case.
In general (and in particular if~\eqref{Ass} holds with $\beta = \pm\infty$),
however, the domain of~$h$ is not necessarily equal to $H_0^1(\Omega_0)$.  
Let us summarise the preceding observations into the following proposition.

\begin{Proposition}\label{Prop.core}
Let $f' \in L_\mathrm{loc}^\infty(\Real)$.
Then $C_0^\infty(\Omega_0)$ is a core of~$h$.
If moreover $f' \in L^\infty(\Real)$, then $\Dom h = H_0^1(\Omega_0)$. 
\end{Proposition}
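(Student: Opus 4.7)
The plan is to prove the two assertions separately, both by a direct mollification argument combined with the two-sided bound~\eqref{elementary}.

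For the core property, I would first observe that $C_0^\infty(\Omega)$ is a core of $Q_D^\Omega$ (the standard fact for the Dirichlet Laplacian on an open set), so its image $\mathcal{U} C_0^\infty(\Omega)$ is dense in $\Dom h$ in the $h$-graph norm, since $\mathcal{U}$ is a unitary map intertwining the two forms. It therefore suffices to approximate each element $\psi = u \circ \mathscr{L}$ with $u \in C_0^\infty(\Omega)$ by elements of $C_0^\infty(\Omega_0)$ in the $h$-graph norm. Since $f$ is locally Lipschitz under the hypothesis $f' \in L^\infty_\mathrm{loc}(\Real)$, the composition $\psi = u \circ \mathscr{L}$ is Lipschitz with compact support contained in $\Omega_0$. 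A standard regularization $\psi_\eps := \psi \ast \rho_\eps$ by a mollifier $\rho_\eps$ then belongs to $C_0^\infty(\Omega_0)$ for all sufficiently small $\eps > 0$, its support is contained in a fixed compact set $K \subset \Omega_0$ independent of $\eps$, and $\psi_\eps \to \psi$ in $H^1(\Omega_0)$. Applying the upper bound in~\eqref{elementary} to $\psi - \psi_\eps$ and using that $\|f'\|_{L^\infty(K)}$ is finite by local boundedness yields $h[\psi - \psi_\eps] \to 0$ as $\eps \to 0^+$, which establishes the core property.

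For the second assertion, if $f' \in L^\infty(\Real)$, then~\eqref{elementary} holds with $\|f'\|_{L^\infty(\supp\psi)}$ replaced by the global constant $\|f'\|_{L^\infty(\Real)}$. Choosing $\delta \in (0,1)$ small enough that $1 - \frac{\delta}{1-\delta}\|f'\|_{L^\infty(\Real)}^2 > 0$ turns this into a uniform two-sided equivalence between the graph norm of $h$ and the $H^1(\Omega_0)$-norm on $C_0^\infty(\Omega_0)$. Consequently, the $h$-closure and the $H^1$-closure of $C_0^\infty(\Omega_0)$ coincide; the latter is $H_0^1(\Omega_0)$ by definition, while the former is $\Dom h$ by the core property just established. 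Hence $\Dom h = H_0^1(\Omega_0)$.

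The only delicate point, and the only place where the hypothesis on $f'$ really enters, is the control of $\|f'\|_{L^\infty(K)}$ on the support of the mollified functions. This explains the split of hypotheses: the first assertion needs only local boundedness because each $\psi \in \mathcal{U} C_0^\infty(\Omega)$ is compactly supported, whereas the stronger domain identification requires global boundedness in order to upgrade the bounds on the core to a uniform norm equivalence on the full domain. Beyond this point, the proof is a mechanical application of the closed-form machinery combined with the already-displayed inequalities.
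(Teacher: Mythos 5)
Your proof is correct and takes essentially the same route as the paper: transport the core $C_0^\infty(\Omega)$ of $Q_D^\Omega$ through the unitary $\mathcal{U}$, mollify the resulting compactly supported functions using the two-sided bound~\eqref{elementary} with $\|f'\|_{L^\infty}$ taken over a fixed compact set, and, when $f'\in L^\infty(\Real)$, upgrade~\eqref{elementary} to a global equivalence of the $h$-graph norm with the $H^1(\Omega_0)$-norm to conclude $\Dom h = H_0^1(\Omega_0)$. You merely spell out the ``standard mollification argument'' that the paper leaves implicit.
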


In a distributional sense, 
we have $H = - (\partial_s-f'\partial_t)^2 - \partial_t^2$,
but we shall use neither this formula nor any information
on the operator domain of~$H$.
Our spectral analysis of~$H$ will be exclusively based on
the associated quadratic form~\eqref{form}.
Notice that the structure of~$H$ is similar to 
twisted three-dimensional tubes~\cite{EKK}
as well as curved two-dimensional wedges~\cite{K12}.	

\section{Hardy-type inequalities}\label{Sec.Hardy}
%
In this section we prove Theorem~\ref{Thm.Hardy}.
The strategy is to first establish a ``local'' Hardy-type inequality,
for which the Hardy weight might not be everywhere positive,
and then ``smeared it out'' to the ``global'' inequality~\eqref{Hardy}
with help of a variant of the classical one-dimensional Hardy inequality.
Throughout this section, we assume that $\beta \in \Real$
and $\eps \in L_\mathrm{loc}^\infty(\Real)$.	 

Let 
$$
  E_1 := \left(\frac{\pi}{d}\right)^2
  \qquad \mbox{and} \qquad
  \chi_1(t) := \sqrt{\frac{2}{d}} \, \sin(E_1 t)
$$
denote respectively the lowest eigenvalue and the corresponding
normalised eigenfunction of the Dirichlet Laplacian in $\sii((0,d))$.  
By the variational definition of~$E_1$,
we have the Poincar\'e-type inequality
\begin{equation}\label{Poincare}
  \forall \chi \in H_0^1((0,d)) \,, \qquad
  \int_0^d |\chi'(t)|^2 \, \der t \geq E_1 \int_0^d |\chi(t)|^2 \, \der t
  \,.
\end{equation}
Notice that $E_1 = E_1(0)$, where the latter is introduced in~\eqref{ess}.
Neglecting the first term on the right-hand side of~\eqref{form}
and using~\eqref{Poincare} for the second term
together with Fubini's theorem, we immediately get the lower bound
\begin{equation}\label{robust}
  H \geq E_1
  \,,
\end{equation}
which is independent of~$f$.	

The main ingredient in our approach is the following valuable lemma 
which reveals a finer structure of the form~$h$.
With an abuse of notation, we use the same symbol~$\chi_1$
for the function $1 \otimes \chi_1$ on $\Real \times (0,d)$,
and similarly for its derivative~$\chi_1'$,
while~$\eps$ also denotes the function 
$\eps \otimes 1$ on $\Real \times (0,d)$. 

\begin{Lemma}[Ground-state decomposition]\label{Lem.fine}
For every $\psi \in C_0^\infty(\Omega_0)$, we have
\begin{equation}\label{fine}
\begin{aligned}
  h[\psi] - E_1(\beta) \|\psi\|^2
  = \ & \|\partial_s\psi
  - \eps \partial_t \psi - \beta \chi_1\partial_t(\chi_1^{-1}\psi) \|^2 
  + \|\chi_1\partial_t(\chi_1^{-1}\psi)\|^2
  \\
  & + \int_{\Omega_0} \beta\,\eps(s)\left[
  E_1 + \left(\frac{\chi_1'(t)}{\chi_1(t)}\right)^2
  \right]
  |\psi(s,t)|^2 \, \der s \, \der t
  .
\end{aligned}
\end{equation}
\end{Lemma}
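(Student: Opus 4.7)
The plan is to implement the standard ground-state substitution $\phi := \chi_1^{-1}\psi$, expand $h[\psi]$ as an algebraic identity using the eigenvalue equation $\chi_1'' = -E_1 \chi_1$, and check that the leftover cross terms produce exactly the claimed integral. Throughout, the fact that $\psi \in C_0^\infty(\Omega_0)$ means $\psi$ vanishes in a neighbourhood of $\{t=0\}$ and $\{t=d\}$, so $\phi$ is smooth with compact support in $\Omega_0$ and all integrations by parts in $t$ produce no boundary terms; similarly, compact support in $s$ kills any boundary contribution in $s$.

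First, I would substitute $f' = \beta + \eps$ into \eqref{form} and use the standard identity
\begin{equation*}
  \|\partial_t\psi\|^2 - E_1\|\psi\|^2 = \|\chi_1\partial_t(\chi_1^{-1}\psi)\|^2,
\end{equation*}
obtained from $\partial_t\psi = \chi_1\partial_t\phi + (\chi_1'/\chi_1)\psi$ by squaring, integrating the cross-term by parts in $t$, and invoking $\chi_1'' = -E_1\chi_1$. This already eats one copy of $E_1\|\psi\|^2$, and I still owe $\beta^2 E_1\|\psi\|^2$ to reach $E_1(\beta) = (1+\beta^2)E_1$.

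Next, using again $\partial_t\psi = \chi_1\partial_t\phi + (\chi_1'/\chi_1)\psi$, I would decompose the longitudinal term as
\begin{equation*}
  \partial_s\psi - f'\partial_t\psi \;=\; A \;-\; \beta\,\frac{\chi_1'}{\chi_1}\,\psi,
  \qquad A := \partial_s\psi - \eps\,\partial_t\psi - \beta\,\chi_1\partial_t(\chi_1^{-1}\psi),
\end{equation*}
expand the square, and show that
\begin{equation*}
  -2\beta\,\mathrm{Re}\,\Big\langle A,\,\tfrac{\chi_1'}{\chi_1}\psi\Big\rangle
  + \beta^2\Big\|\tfrac{\chi_1'}{\chi_1}\psi\Big\|^2 - \beta^2 E_1\|\psi\|^2
  \;=\; \int_{\Omega_0} \beta\,\eps(s)\!\left[E_1 + \Big(\tfrac{\chi_1'(t)}{\chi_1(t)}\Big)^{\!2}\right]\!|\psi|^2.
\end{equation*}

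The core of the computation sits in the cross term, which splits into three pieces matching the three contributions to $A$. The $\partial_s\psi$ piece reduces to $-\beta\int(\chi_1'/\chi_1)\partial_s|\psi|^2$, which vanishes because the $t$-factor is constant in $s$ and $\psi$ has compact support in $s$. The $\eps\,\partial_t\psi$ piece becomes $\beta\int\eps(\chi_1'/\chi_1)\partial_t|\psi|^2$; integrating by parts in $t$ and using $\partial_t(\chi_1'/\chi_1) = \chi_1''/\chi_1 - (\chi_1'/\chi_1)^2 = -E_1 - (\chi_1'/\chi_1)^2$, this yields exactly the claimed $\int \beta\eps[E_1 + (\chi_1'/\chi_1)^2]|\psi|^2$. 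The $\beta\,\chi_1\partial_t\phi$ piece becomes $\beta^2\int\chi_1\chi_1'\partial_t|\phi|^2$, and integrating by parts together with $(\chi_1\chi_1')' = (\chi_1')^2 - E_1\chi_1^2$ gives $-\beta^2\int(\chi_1'/\chi_1)^2|\psi|^2 + \beta^2 E_1\|\psi\|^2$, which cancels the remaining two terms on the left-hand side above exactly.

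The main obstacle is purely bookkeeping: one has to track signs through two integrations by parts and use $\chi_1'' = -E_1\chi_1$ twice (once in the transverse identity, once in the cross-term calculation) and the Wronskian-like identity $(\chi_1\chi_1')' = (\chi_1')^2 - E_1\chi_1^2$ to produce the perfect cancellation that isolates $\int \beta\eps[E_1 + (\chi_1'/\chi_1)^2]|\psi|^2$ as the sole residual term. No analytic difficulty arises because the compact support of $\psi$ renders all boundary terms trivial.
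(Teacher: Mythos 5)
Your proposal is correct and follows essentially the same route as the paper: the ground-state substitution $\phi=\chi_1^{-1}\psi$, the transverse identity $\|\partial_t\psi\|^2=\|\chi_1\partial_t\phi\|^2+E_1\|\psi\|^2$, and the decomposition of the longitudinal term with cross terms handled by integration by parts in $s$ and $t$ together with $\chi_1''=-E_1\chi_1$. The only difference is cosmetic bookkeeping (you integrate by parts against $\partial_t(\chi_1'/\chi_1)$ where the paper splits $\eps\partial_t\psi$ into its $\phi$ components), and the resulting cancellations are identical.
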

\begin{proof}
Let $\phi\in C_0^\infty(\Omega_0)$ be defined 	
by the decomposition $\psi(s,t) = \chi_1(t) \phi(s,t)$.
Writing $2\Re(\phi\partial_t\phi) = \partial_t |\phi|^2$,
integrating by parts and using the differential equation
that~$\chi_1$ satisfies, we have 
$$
  \|\partial_t\psi\|^2 = \|\chi_1'\phi\|^2 + \|\chi_1\partial_t\phi\|^2
  + 2\Re(\chi_1'\phi,\chi_1\partial_t\phi)
  = \|\chi_1\partial_t\phi\|^2 + E_1 \|\psi\|^2
  \,,
$$
where $(\cdot,\cdot)$ denotes the inner product of $\sii(\Omega_0)$.
Similarly,
$$
\begin{aligned}
  \|\partial_s\psi-f'\partial_t\psi\|^2
  =\ & \|\partial_s\psi-\eps\partial_t\psi - \beta \chi_1\partial_t\phi\|^2 
  + \beta^2 \|\chi_1'\phi\|^2
  \\
  &- 2\beta \, \Re(\chi_1\partial_s\phi
  -\eps\chi_1\partial_t\phi - \eps \chi_1'\phi - \beta \chi_1\partial_t\phi,
  \chi_1'\phi)
  \\
  =\ & \|\partial_s\psi-\eps\partial_t\psi - \beta \chi_1\partial_t\phi\|^2 
  \\
  & + \beta E_1 (\eps\chi_1\phi,\chi_1\phi)
  +\beta (\eps\chi_1'\phi,\chi_1'\phi)
  + \beta^2 E_1 (\chi_1\phi,\chi_1\phi)
  \\
  =\ & \|\partial_s\psi-\eps\partial_t\psi - \beta \chi_1\partial_t\phi\|^2 
  \\
  & + \beta E_1 (\eps\psi,\psi)
  + \beta (\eps\chi_1'\phi,\chi_1'\phi)
  + \beta^2 E_1 \|\psi\|^2 
  \,,
\end{aligned}
$$
where the second equality follows by integrations by parts 
in the cross term.
Summing up the results of the two computations
and subtracting $E_1(\beta) \|\psi\|^2$, 
we arrive at the desired identity.
\end{proof}

From now on, let us assume that the shear is repulsive, 
\ie\ $\beta\eps \geq 0$ and $\eps \not= 0$.
If $\beta\not=0$, then Lemma~\ref{Lem.fine} immediately
gives the following local Hardy-type inequality
\begin{equation}\label{Hardy.local}
  H - E_1(\beta) \geq \beta\eps 
  \left[
  E_1 + \left(\frac{\chi_1'}{\chi_1}\right)^2
  \right]
  .
\end{equation}
Recall that~$E_1(\beta)$ corresponds to the threshold
of the essential spectrum if~\eqref{Ass} holds (\cf~Theorem~\ref{Thm.ess}),
so~\eqref{Hardy.local} in particular ensures that 
there is no (discrete) spectrum below~$E_1(\beta)$
(even if $\beta=0$). 
Note that the Hardy weight on the right-hand side of~\eqref{Hardy.local}
diverges on $\partial\Omega_0$.
The terminology ``local'' comes from the fact that 
the Hardy weight might not be everywhere positive 
(\eg~if $\beta\eps$ is compactly supported).	

In order to obtain a non-trivial non-negative lower bound
including the case $\beta=0$, 
we have to exploit the positive terms
that we have neglected when coming from~\eqref{fine} to~\eqref{Hardy.local}.	
To do so, let $I \subset \Real$ be any bounded open interval 
and set $\Omega_0^I := I\times(0,d)$.
In the Hilbert space $\sii(\Omega_0^I)$ let us consider
the quadratic form
$$
\begin{aligned}
  q_I[\psi] 
  &:= 
  \|\partial_s\psi
  - \eps \partial_t \psi 
  - \beta \chi_1\partial_t(\chi_1^{-1}\psi) \|_{\sii(\Omega_0^I)}^2 
  + \|\chi_1\partial_t(\chi_1^{-1}\psi)\|_{\sii(\Omega_0^I)}^2
  \,, \\
  \Dom q_I
  &:= \{\psi \upharpoonright \Omega_0^I:
  \psi \in H_0^1(\Omega_0)\}
  \,.
\end{aligned}
$$
The form corresponds to the sum of the two first terms 
on the right-hand side of~\eqref{fine}
with the integrals restricted to~$\Omega_0^I$.

\begin{Lemma}\label{Lem.compact}
The form~$q_I$ is closed. 
\end{Lemma}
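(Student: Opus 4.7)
The plan is to establish closedness by showing that the graph norm of~$q_I$ is equivalent, on $\Dom q_I$, to the standard $H^1(\Omega_0^I)$-norm. Via the usual Sobolev extension theorem, the domain $\Dom q_I$ coincides with the closed subspace of $H^1(\Omega_0^I)$ consisting of functions with vanishing trace on $I\times\{0,d\}$, and hence is already complete for the $H^1$-norm; the norm equivalence will then yield closedness for free.

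The principal tool is the one-dimensional ground-state identity
$$
\|\chi_1 \partial_t(\chi_1^{-1}\psi)\|_{\sii(\Omega_0^I)}^2 = \|\partial_t\psi\|_{\sii(\Omega_0^I)}^2 - E_1 \|\psi\|_{\sii(\Omega_0^I)}^2 ,
$$
valid for every $\psi \in \Dom q_I$. It is derived by the very same integration by parts in the transverse variable as in the proof of Lemma~\ref{Lem.fine}, the boundary contributions vanishing thanks to $\chi_1(0)=\chi_1(d)=0$ combined with the $H^1$-regularity of~$\psi$. An immediate consequence is $\|\chi_1\partial_t(\chi_1^{-1}\psi)\| \leq \|\partial_t\psi\|$, which, together with $\eps \in L^\infty(I)$ (guaranteed by $\eps \in L_\mathrm{loc}^\infty(\Real)$ and boundedness of~$I$) and the finiteness of~$\beta$, yields the upper bound $q_I[\psi] \leq C\|\psi\|_{H^1(\Omega_0^I)}^2$ after one application of the triangle inequality to the first summand of~$q_I$.

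The reverse bound $\|\psi\|_{H^1(\Omega_0^I)}^2 \leq c_1\, q_I[\psi] + c_2\,\|\psi\|^2$ is the delicate step. The idea is to apply the elementary Young-type estimate
$\|a-b\|^2 \geq (1-\delta)\|a\|^2 - (\delta^{-1}-1)\|b\|^2$
with $\delta \in (0,1)$ close to~$1$, $a := \partial_s\psi$ and $b := \eps\partial_t\psi + \beta\chi_1\partial_t(\chi_1^{-1}\psi)$, so as to isolate a positive multiple of $\|\partial_s\psi\|^2$ on the right-hand side. The transverse errors must then be bookkept using $\|\chi_1\partial_t(\chi_1^{-1}\psi)\|^2$ rather than $\|\partial_t\psi\|^2$, exploiting the fact that the former already appears with coefficient~$1$ in the definition of~$q_I$; the ground-state identity is subsequently invoked a second time to convert the resulting bound into one involving $\|\partial_t\psi\|^2$. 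Taking $\delta$ sufficiently close to~$1$ --- specifically any $\delta > 2K/(2K+1)$ with $K := \|\eps\|_{L^\infty(I)}^2 + \beta^2$ --- keeps the coefficients of both $\|\partial_s\psi\|^2$ and $\|\chi_1\partial_t(\chi_1^{-1}\psi)\|^2$ on the right-hand side strictly positive.

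The main obstacle is precisely this calibration of~$\delta$: a naive comparison of $q_I$ to the standard $H^1$-form (or to the form~$h$ restricted to $\Omega_0^I$) produces relative bounds proportional to $\beta^2 + \|\eps\|_{L^\infty(I)}^2$, which can exceed~$1$ and thus rule out a direct relative-boundedness argument. Substituting $\|\chi_1\partial_t(\chi_1^{-1}\psi)\|^2 = \|\partial_t\psi\|^2 - E_1\|\psi\|^2$ in place of $\|\partial_t\psi\|^2$ in the bookkeeping --- the surplus $E_1\|\psi\|^2$ being harmlessly absorbed into the $L^2$ component of the graph norm --- is exactly what removes the obstruction and allows~$\delta$ to be pushed all the way to~$1$.
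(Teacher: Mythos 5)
Your proof is correct and follows essentially the same route as the paper: two-sided estimates on the first summand of $q_I$ via the Cauchy--Schwarz/Young inequality with a parameter $\delta$, combined with the ground-state identity $\|\chi_1\partial_t(\chi_1^{-1}\psi)\|^2=\|\partial_t\psi\|^2-E_1\|\psi\|^2$, to show the graph topology of $q_I$ is equivalent to the $H^1(\Omega_0^I)$ topology on its (complete) domain. The only cosmetic difference is that you compare directly with the $H^1$-norm, whereas the paper first reduces to the $\beta=0$ case and invokes the bounds of \eqref{elementary}; the substance is identical.
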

\begin{proof}
First of all, notice that 
the form~$q_I$ is obviously closed if $\beta=0$
(one can proceed as in~\eqref{elementary}). 
In general, given $\psi \in \Dom q_I$,
we employ the estimates
\begin{multline*}
  \delta \|\partial_s\psi - \eps \partial_t \psi\|_{\sii(\Omega_0^I)}^2
  + \left(1-\frac{\delta}{1-\delta}\,	\beta^2\right) 
  \|\chi_1\partial_t(\chi_1^{-1}\psi)\|_{\sii(\Omega_0^I)}^2
  \\
  \leq q_I[\psi] \leq
  2 \|\partial_s\psi - \eps \partial_t \psi\|_{\sii(\Omega_0^I)}^2
  + (1+2\beta^2)\|\chi_1\partial_t(\chi_1^{-1}\psi)\|_{\sii(\Omega_0^I)}^2
\end{multline*}
valid for every $\delta \in (0,1)$  
and the identity 
(easily checked by employing the density of 
$C_0^\infty(\Omega_0)$ in $H_0^1(\Omega_0)$)
\begin{equation}\label{minusE1}
  \|\chi_1\partial_t(\chi_1^{-1}\psi)\|_{\sii(\Omega_0^I)}^2
  = \|\partial_t\psi\|_{\sii(\Omega_0^I)}^2 - E_1 \|\psi\|_{\sii(\Omega_0^I)}^2
  \,.
\end{equation}
Instead of the latter, we could also use the fact that
$\chi_1^{-1}\psi \in \sii(\Omega_0)$ whenever $\psi \in H_0^1(\Omega_0)$,
\cf~\cite[Thm.~1.5.6]{Davies_1989}.
Using in addition the aforementioned fact that~$q_I$ is closed if $\beta=0$,
we conclude that~$q_I$ is closed for any value $\beta \in \Real$.
\end{proof}

Consequently, the form~$q_I$
is associated to a self-adjoint operator~$Q_I$ with compact resolvent.
Let us denote by~$\lambda_I$ its lowest eigenvalue, \ie,
\begin{equation}\label{lambda}
  \lambda_I := \inf_{\stackrel[\psi\not=0]{}{\psi \in \Dom q_I}} 
  \frac{q_I[\psi]}
  {\, \|\psi\|_{\sii(\Omega_0^I)}^2} 
  \,.
\end{equation}
The eigenvalue~$\lambda_I$ is obviously non-negative.
The following lemma shows that~$\lambda_I$ is actually positive
whenever~$\eps$ is non-trivial on~$I$. 

\begin{Lemma}\label{Lem.lambda}
$\lambda_I=0$ if, and only if, $\eps=0$ on~$I$. 
\end{Lemma}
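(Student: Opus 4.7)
The plan is to treat the two implications separately: the easy direction produces an explicit minimiser, while the converse exploits the compact resolvent of~$Q_I$ recorded after Lemma~\ref{Lem.compact} to obtain a genuine zero-eigenfunction and then performs a separation of variables.

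If $\eps \equiv 0$ on~$I$, pick any cut-off $\varrho \in C_0^\infty(\Real)$ equal to~$1$ on~$I$. Then $\varrho \otimes \chi_1 \in H_0^1(\Omega_0)$ and its restriction to $\Omega_0^I$ is $1 \otimes \chi_1$, so the test function $\psi(s,t) := \chi_1(t)$ belongs to $\Dom q_I$. For this~$\psi$ we have $\chi_1^{-1}\psi \equiv 1$, which annihilates the second term of~$q_I$; and $\partial_s\psi = 0$ combined with $\eps \equiv 0$ on~$I$ annihilates the first. Hence $q_I[\psi] = 0$, so $\lambda_I = 0$.

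Conversely, assume $\lambda_I = 0$. Since~$Q_I$ has compact resolvent and $q_I \geq 0$, this infimum is attained by some non-trivial $\psi \in \Dom q_I$ with $q_I[\psi] = 0$; both squared norms composing~$q_I$ then vanish. The vanishing of $\chi_1\partial_t(\chi_1^{-1}\psi)$ in $L^2(\Omega_0^I)$, combined with the $L^2$-control $\chi_1^{-1}\psi \in L^2(\Omega_0^I)$ noted in the proof of Lemma~\ref{Lem.compact}, forces $\chi_1^{-1}\psi$ to be independent of~$t$ in the distributional sense. Consequently $\psi(s,t) = \varphi(s)\,\chi_1(t)$ for some $\varphi \in H^1(I)$, and substitution into the vanishing of the first term of~$q_I$ yields the a.e.\ identity
\begin{equation*}
  \chi_1(t)\,\varphi'(s) \;=\; \eps(s)\,\chi_1'(t)\,\varphi(s)
  \qquad \mbox{in } \Omega_0^I \,.
\end{equation*}
Multiplying by $\chi_1(t)$ and integrating over $(0,d)$, and using $\int_0^d \chi_1^2\,\der t = 1$ together with $\int_0^d \chi_1\chi_1'\,\der t = \tfrac{1}{2}[\chi_1^2]_0^d = 0$, gives $\varphi'(s) = 0$ a.e.\ on~$I$. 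Thus~$\varphi$ is a non-zero constant on the connected interval~$I$, and the identity reduces to $\eps(s)\,\chi_1'(t)\,\varphi = 0$; since $\chi_1' \not\equiv 0$ we conclude $\eps \equiv 0$ a.e.\ on~$I$.

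The most delicate point is the passage, in the converse direction, from the distributional vanishing of $\partial_t(\chi_1^{-1}\psi)$ to the rigorous product representation $\psi = \varphi \otimes \chi_1$ with $\varphi \in H^1(I)$; this step relies on the Hardy-type $L^2$-control of $\chi_1^{-1}\psi$ already invoked in the proof of Lemma~\ref{Lem.compact}, and after it is secured the remainder of the argument is a clean separation of variables via orthogonality against~$\chi_1$ in the $t$-direction.
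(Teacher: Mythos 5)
Your proof is correct and follows essentially the same route as the paper: the trivial direction via the test function $1\otimes\chi_1$, and the converse via the minimiser supplied by the compact resolvent of~$Q_I$, reduction to the separated form $\psi=\varphi\otimes\chi_1$, and the orthogonality $\int_0^d\chi_1\chi_1'\,\der t=0$ to force $\varphi$ constant and $\eps=0$ on~$I$. The only (harmless) deviation is that you extract the separated form directly from the vanishing of $\chi_1\partial_t(\chi_1^{-1}\psi)$ together with $\chi_1^{-1}\psi\in\sii(\Omega_0^I)$, whereas the paper passes through the identity~\eqref{minusE1} and the spectral gap $E_2-E_1$ to eliminate the component orthogonal to~$\chi_1$; both arguments are sound.
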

\begin{proof}
If $\eps \upharpoonright I = 0$, 
then the function $\psi(s,t):=\chi_1(t)$
minimises~\eqref{lambda} with $\lambda_I=0$.
Conversely, let us assume that $\lambda_I=0$
and denote by~$\psi_1$ the corresponding eigenfunction of~$Q_I$,
\ie~the minimiser of~\eqref{lambda}.  
It follows from~\eqref{lambda} together with~\eqref{minusE1} that 	
\begin{equation}\label{two}
\begin{aligned}
  \|\partial_s\psi_1
  - \eps \partial_t \psi_1 
  - \beta \chi_1\partial_t(\chi_1^{-1}\psi_1) \|_{\sii(\Omega_0^I)}^2 &= 0
  \,,
  \\
  \|\partial_t\psi_1\|_{\sii(\Omega_0^I)}^2
  -E_1\|\psi_1\|_{\sii(\Omega_0^I)}^2 &= 0
  \,.
\end{aligned}
\end{equation}
Writing $\psi_1(s,t) = \varphi(s)\chi_1(t) + \phi(s,t)$,
where
$$
  \int_0^d \chi_1(t)\phi(s,t) \, \der t = 0
$$
for almost every $s \in I$, we have 
$$
  \|\partial_t\psi_1\|_{\sii(\Omega_0^I)}^2
  -E_1\|\psi_1\|_{\sii(\Omega_0^I)}^2  
  = \|\partial_t\phi\|_{\sii(\Omega_0^I)}^2
  -E_1\|\phi\|_{\sii(\Omega_0^I)}^2  
  \geq (E_2-E_1) \|\phi\|_{\sii(\Omega_0^I)}^2
  \,,
$$
where $E_2=4E_1$ 
is the second eigenvalue of the Dirichlet Laplacian in~$\sii((0,d))$.
Since the gap $E_2-E_1$ is positive, 
it follows from the second identity in~\eqref{two} that $\phi=0$.
Plugging now the separated-eigenfunction Ansatz
$\psi_1(s,t) = \varphi(s)\chi_1(t)$
into the first identity in~\eqref{two}
and integrating by parts with respect to~$t$ in the cross term, 
we obtain
$$
  0 = \|\partial_s\psi_1-\eps\partial_t\psi_1\|_{\sii(\Omega_0^I)}^2 
  = \|\varphi'\chi_1\|_{\sii(\Omega_0^I)}^2 
  + \|\eps\varphi\chi_1'\|_{\sii(\Omega_0^I)}^2 
  \,.
$$
Consequently, $\varphi$~is necessarily constant 
and $\|\eps\|_{\sii(I)}=0$.  
\end{proof}

Using~\eqref{lambda}, we can improve~\eqref{Hardy.local} 
to the following local Hardy-type inequality
\begin{equation}\label{Hardy.local.improved}
  H - E_1(\beta) \geq 
  \beta\eps 
  \left[
  E_1 + \left(\frac{\chi_1'}{\chi_1}\right)^2
  \right]
  + \lambda_I \, 1_{I\times(0,d)}
  \,,
\end{equation}
where $1_M$~denotes the characteristic function of a set~$M$.
The inequality is valid with any bounded interval $I \subset \Real$,
but recall that~$\lambda_I$ is positive if, and only if, $I$~is chosen in 
such a way that~$\eps$ is not identically equal to zero on~$I$,
\cf~Lemma~\ref{Lem.lambda}. 

The passage from the local Hardy-type inequality~\eqref{Hardy.local.improved}
to the global Hardy-type inequality of Theorem~\ref{Thm.Hardy}
will be enabled by means
of the following crucial lemma, 
which is essentially a variant of the classical
one-dimensional Hardy inequality.

\begin{Lemma}\label{Lem.1DHardy}
Let $s_0 \in \Real$ and
$\psi \in C_0^\infty(\Omega_0\setminus\{s_0\}\times(0,d))$. 
Then
\begin{equation}\label{1DHardy}
  \|\partial_s\psi
  - \eps \partial_t \psi - \beta \chi_1\partial_t(\chi_1^{-1}\psi)\|^2 
  + \|\chi_1\partial_t(\chi_1^{-1}\psi)\|^2
  \geq \frac{1}{4(1+\beta^2)} \int_{\Omega_0} 
  \frac{|\psi(s,t)|^2}{(s-s_0)^2} \, \der s \, \der t 
  \,.
\end{equation}
\end{Lemma}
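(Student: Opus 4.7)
The plan is to reduce the inequality to the classical one-dimensional Hardy inequality
$$
\int_{\Real} |u'(s)|^2 \, \der s \geq \frac{1}{4} \int_{\Real} \frac{|u(s)|^2}{(s-s_0)^2} \, \der s
$$
(valid for $u \in C_0^\infty(\Real \setminus \{s_0\})$) applied slicewise in~$t$, after two elementary reductions: a pointwise Cauchy--Schwarz inequality that produces the factor $1+\beta^2$, and a ``straightening'' change of variables that absorbs the~$\eps$ term.

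First, I would denote the two vector fields on the left-hand side by $\xi := \partial_s\psi - \eps\partial_t\psi - \beta\chi_1\partial_t(\chi_1^{-1}\psi)$ and $\eta := \chi_1\partial_t(\chi_1^{-1}\psi)$, and observe the algebraic identity $\xi + \beta\eta = \partial_s\psi - \eps\partial_t\psi$. The pointwise Cauchy--Schwarz inequality $|a+\beta b|^2 \leq (1+\beta^2)(|a|^2+|b|^2)$ in $\Com^2$, integrated over~$\Omega_0$, yields
\begin{equation*}
  \|\xi\|^2 + \|\eta\|^2 \geq \frac{1}{1+\beta^2}\,\|\partial_s\psi - \eps\partial_t\psi\|^2,
\end{equation*}
so that only the second factor needs to be related to the Hardy weight.

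Next, since $\eps \in L_\mathrm{loc}^\infty(\Real)$, I can pick any locally absolutely continuous primitive $g$ of~$\eps$ and introduce the sheared function $\tilde\psi(s,t) := \psi(s,t-g(s))$. A direct chain-rule computation shows $\partial_s\tilde\psi(s,t) = (\partial_s\psi - \eps\partial_t\psi)(s,t-g(s))$, and the translation $u = t-g(s)$ (with Jacobian one) gives
\begin{equation*}
  \int_{\Real^2} |\partial_s\tilde\psi(s,t)|^2 \, \der s \, \der t
  = \|\partial_s\psi - \eps\partial_t\psi\|^2,
  \qquad
  \int_{\Real^2} \frac{|\tilde\psi(s,t)|^2}{(s-s_0)^2} \, \der s \, \der t
  = \int_{\Omega_0} \frac{|\psi(s,t)|^2}{(s-s_0)^2} \, \der s \, \der t.
\end{equation*}
The support assumption $\supp\psi \cap (\{s_0\}\times(0,d)) = \varnothing$ guarantees $\tilde\psi(s_0,t) = \psi(s_0, t-g(s_0)) = 0$ for all~$t$, while $\tilde\psi(\cdot,t)$ is Lipschitz and compactly supported thanks to the local boundedness of~$\eps$; both facts combined let me apply the one-dimensional Hardy inequality to $\tilde\psi(\cdot,t)$ for almost every fixed~$t$. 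Integrating over~$t$ via Fubini and translating back, and then combining with the Cauchy--Schwarz step, delivers exactly~\eqref{1DHardy}.

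The only mildly delicate point is verifying that $\tilde\psi$ has enough regularity to invoke the classical Hardy inequality slicewise: because~$g$ is merely Lipschitz, $\tilde\psi$ is not smooth, but it lies in $H^1(\Real)$ in the $s$-variable for a.e.~$t$ with vanishing trace at~$s_0$, which is exactly what one needs. I do not foresee any other obstacles, since the Cauchy--Schwarz bound is sharp with respect to the combination $\xi+\beta\eta$, and the change of variables exactly removes the $\eps \partial_t\psi$ term that obstructs a direct use of the one-dimensional Hardy inequality.
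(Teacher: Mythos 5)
Your argument is correct, and it takes a genuinely different route from the paper. The paper proves the lemma by a completing-the-square argument performed directly on the two-dimensional form: it expands $\|\partial_s\psi-\eps\partial_t\psi-\beta\chi_1\partial_t(\chi_1^{-1}\psi)-\alpha\rho\psi\|^2\geq 0$ with $\rho(s,t)=(s-s_0)^{-1}$, integrates by parts so that the $\partial_s$ cross term produces $-\alpha\|\rho\psi\|^2$, the $\eps\partial_t$ cross term vanishes (it is $\int \eps\rho\,\partial_t|\psi|^2=0$), the $\beta$ cross term is absorbed by Schwarz into $\|\chi_1\partial_t(\chi_1^{-1}\psi)\|^2$, and finally optimises $\alpha=[2(1+\beta^2)]^{-1}$. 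You instead first discard the $\beta$-term via the pointwise Cauchy--Schwarz bound $\|\xi\|^2+\|\eta\|^2\geq(1+\beta^2)^{-1}\|\xi+\beta\eta\|^2=(1+\beta^2)^{-1}\|\partial_s\psi-\eps\partial_t\psi\|^2$, then straighten the drift term by the shear $\tilde\psi(s,t)=\psi(s,t-g(s))$ with $g'=\eps$, and apply the classical one-dimensional Hardy inequality slicewise in $t$; both routes land on the same constant $\tfrac14(1+\beta^2)^{-1}$, and your factorisation makes its origin transparent (the $1/4$ from Hardy, the $(1+\beta^2)^{-1}$ from Cauchy--Schwarz). The delicate points you flag are handled adequately: since $\eps\in L^\infty_\mathrm{loc}(\Real)$, $g$ is locally Lipschitz, so $\tilde\psi$ is Lipschitz with compact support, each slice $s\mapsto\tilde\psi(s,t)$ is absolutely continuous with a.e.\ derivative $(\partial_s\psi-\eps\partial_t\psi)(s,t-g(s))$ (the exceptional $s$-set is $t$-independent and null), it vanishes at $s=s_0$ by the support hypothesis, and the $t$-translation has Jacobian one, so Fubini closes the argument. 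What the paper's proof buys is that it never leaves the smooth test-function and quadratic-form calculus already used in Lemma~\ref{Lem.fine} (no change of variables, no discussion of Lipschitz regularity); what yours buys is a more elementary and modular reduction to the textbook one-dimensional Hardy inequality.
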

\begin{proof} 
Denote $\rho(s,t):=(s-s_0)^{-1}$.
Given any real number~$\alpha$, we write
\begin{align*} 
  \lefteqn{ \|\partial_s\psi
  - \eps \partial_t \psi - \beta \chi_1\partial_t(\chi_1^{-1}\psi)
  - \alpha \rho \psi\|^2}
  \\
  =\ & \|\partial_s\psi
  - \eps \partial_t \psi - \beta \chi_1\partial_t(\chi_1^{-1}\psi)\|^2
  + \alpha^2 \| \rho \psi \|^2
  - 2 \alpha \Re (
  \partial_s\psi
  - \eps \partial_t \psi - \beta \chi_1\partial_t(\chi_1^{-1}\psi),
  \rho\psi
  )
  \\
  =\ & \|\partial_s\psi
  - \eps \partial_t \psi - \beta \chi_1\partial_t(\chi_1^{-1}\psi)\|^2
  + (\alpha^2-\alpha) \| \rho\psi \|^2
  + 2\alpha\beta \Re(\chi_1\partial_t(\chi_1^{-1}\psi),\rho\psi)
  \\
  \leq \ &
  \left\|\partial_s\psi
  - \eps \partial_t \psi - \beta \chi_1\partial_t(\chi_1^{-1}\psi)\right\|^2
  + (\alpha^2-\alpha+\alpha^2\beta^2) \| \rho\psi \|^2
  + \|\chi_1\partial_t(\chi_1^{-1}\psi)\|^2	
  \,,
\end{align*}
where the second equality follows by integrations by parts
and the estimate is due to the Schwarz inequality.
Since the first line is non-negative, 
we arrive at the desired inequality
by choosing the optimal $\alpha := [2(1+\beta^2)]^{-1}$.
\end{proof}

Now we are in a position to prove the ``global'' Hardy inequality 
of Theorem~\ref{Thm.Hardy}.
\begin{proof}[Proof of Theorem~\ref{Thm.Hardy}]
Let $\psi \in C_0^\infty(\Omega_0)$.
First of all, since~$\eps$ is assumed to be non-trivial, 
there exists a real number~$s_0$ and positive~$b$
such that $\eps$ restricted to the interval 
$I := (s_0-b,s_0+b)$ is non-trivial.
By Lemma~\ref{lambda}, it follows that~$\lambda_I$ is positive.
Then the local Hardy-type inequality~\eqref{Hardy.local.improved} implies
the estimate
\begin{equation}\label{in1}
  h[\psi] - E_1(\beta) \|\psi\|^2
  \geq \lambda_I \|\psi\|_{\sii(\Omega_0^I)}^2
  \,.
\end{equation}

Second,
let $\eta \in C^\infty(\Real)$ be such that $0 \leq \eta \leq 1$,
$\eta=0$ in a neighbourhood of~$s_0$ and $\eta=1$ outside~$I$. 
Let us denote by the same symbol~$\eta$ the function 
$\eta \otimes 1$ on $\Real \times (0,d)$,
and similarly for its derivative~$\eta'$. 
Writing $\psi = \eta\psi + (1-\eta)\psi$, 
we have
\begin{equation}\label{writing}
  \int_{\Omega_0} 
  \frac{|\psi(s,t)|^2}{1+(s-s_0)^2} \, \der s \, \der t
  \leq 2 \int_{\Omega_0} 
  \frac{|(\eta\psi)(s,t)|^2}{(s-s_0)^2} \, \der s \, \der t 
  + 2 \int_{\Omega_0} 
  |((1-\eta)\psi)(s,t)|^2\, \der s \, \der t 
  \,.
\end{equation}
Since~$\eta\psi$ satisfies the hypothesis of Lemma~\ref{Lem.1DHardy}, 
we have 
\begin{align*}
  \lefteqn{
  \int_{\Omega_0}
  \frac{|(\eta\psi)(s,t)|^2}{(s-s_0)^2} \, \der s \, \der t 
  }
  \\
  &\leq 4 (1+\beta^2) \left(
  \|\partial_s(\eta\psi)
  - \eta\eps \partial_t \psi - \eta\beta \chi_1\partial_t(\chi_1^{-1}\psi)\|^2 
  + \|\eta\chi_1\partial_t(\chi_1^{-1}\psi)\|^2
  \right)
  \\
  &\leq 8 (1+\beta^2) \|\partial_s\psi
  - \eps \partial_t \psi - \beta \chi_1\partial_t(\chi_1^{-1}\psi)\|^2 
  + 8 (1+\beta^2) \|\eta'\psi\|
  + 4 (1+\beta^2)  \|\chi_1\partial_t(\chi_1^{-1}\psi)\|^2	
  \\
  &\leq 8 (1+\beta^2) (h[\psi]-E_1(\beta)\|\psi\|^2)
  + 8 (1+\beta^2) \|\eta'\|_{\infty} \|\psi\|_{\sii(\Omega_0^I)}^2
\end{align*}
where $\|\eta'\|_{\infty}$ denotes the supremum norm of~$\eta'$.
Here the last inequality employs Lemma~\ref{Lem.fine}
(recall that we assume that the shear is repulsive). 
At the same time,
$$
  \int_{\Omega_0} 
  |((1-\eta)\psi)(s,t)|^2\, \der s \, \der t 
  \leq \|\psi\|_{\sii(\Omega_0^I)}^2
  \,.
$$
From~\eqref{writing} we therefore deduce	
\begin{equation}\label{in2}
  h[\psi] - E_1(\beta) \|\psi\|^2
  \geq 
  \frac{1}{16(1+\beta^2)} \int_{\Omega_0} 
  \frac{|\psi(s,t)|^2}{1+(s-s_0)^2} \, \der s \, \der t
  - \left(
  \|\eta'\|_\infty^2 + \frac{1}{8(1+\beta^2)}
  \right)
  \|\psi\|_{\sii(\Omega_0^I)}^2
  \,.
\end{equation}

Finally, interpolating between~\eqref{in1} and~\eqref{in2},
we obtain 
\begin{align*}
  h[\psi] - E_1(\beta) \|\psi\|^2
  \geq \ & 
   \frac{\delta}{16(1+\beta^2)} \int_{\Omega_0} 
  \frac{|\psi(s,t)|^2}{1+(s-s_0)^2} \, \der s \, \der t
  \\
  & + \left[
  (1-\delta) \lambda_I
  - \delta \|\eta'\|_\infty^2 - \frac{\delta}{8(1+\beta^2)}
  \right]
  \|\psi\|_{\sii(\Omega_0^I)}^2
\end{align*}
for every real $\delta$.
Choosing (positive)~$\delta$ in such a way
that the square bracket vanishes,
we arrive at the global Hardy-type inequality
$$
  h[\psi] - E_1(\beta) \|\psi\|^2
  \geq 
  c' \int_{\Omega_0} 
  \frac{|\psi(s,t)|^2}{1+(s-s_0)^2} \, \der s \, \der t
$$
with
$$
  c' := 
  \frac{\lambda_I}{
  16(1+\beta^2) (\lambda_I + \|\eta'\|_\infty^2) + 2}
$$
From this inequality we also deduce 
\begin{equation}\label{Hardy.better}
  h[\psi] - E_1(\beta) \|\psi\|^2
  \geq 
  c \int_{\Omega_0} 
  \frac{|\psi(s,t)|^2}{1+s^2} \, \der s \, \der t
\end{equation}
with
$$
  c := c' \inf_{s\in\Real} \frac{1+s^2}{1+(s-s_0)^2}  
  \,,
$$
where the infimum is positive.
The desired inequality~\eqref{Hardy} follows
by the unitary equivalence between~$H$ and~$-\Delta_D^\Omega$
together with the fact that the the coordinates~$s$ and~$x$
are equivalent through this transformation, \cf~\eqref{tube}.
\end{proof}
\begin{Remark}
The present proof does not employ the presence 	of the first term
on the right-hand side of~\eqref{Hardy.local.improved}.
The inequality~\eqref{Hardy.better}
can be consequently improved to
$$
  H - E_1(\beta) \geq 
  \delta \, \beta\eps(s) 
  \left[
  E_1 + \left(\frac{\chi_1'(t)}{\chi_1(t)}\right)^2
  \right]
  + (1-\delta) \, \frac{c}{1+s^2}
$$
with any $\delta \in [0,1]$.
\end{Remark}
%

\section{Location of the essential spectrum}\label{Sec.ess} 
%
In this section we prove Theorem~\ref{Thm.ess}.	
Since the technical approaches for finite and infinite~$\beta$
are quite different, we accordingly split the section 
into two subsections.
In either case, we always assume $f' \in L_\mathrm{loc}^\infty(\Real)$. 

\subsection{Finite limits}
We employ the following characterisation of the essential spectrum,
for which we are inspired in~\cite[Lem.~4.2]{DDI}.

\begin{Lemma}\label{Lem.ess}
A real number~$\lambda$ belongs to the essential spectrum of~$H$
if, and only if, there exists a sequence 
$\{\psi_n\}_{n=1}^\infty \subset \Dom h$ such that
the following three conditions hold:
\begin{enumerate}
\item[\emph{(i)}]
$\|\psi_n\| = 1$ for every $n \geq 1$,
\item[\emph{(ii)}]
$
  (H-\lambda)\psi_n \to 0 
$
as $n\to\infty$
in the norm of the dual space~$(\Dom h)^*$.
\item[\emph{(iii)}]
$\supp \psi_n \subset \Omega_0 \setminus (-n,n)\times(0,d)$
for every $n \geq 1$.
\end{enumerate}
\end{Lemma}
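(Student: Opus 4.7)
The strategy is to interpret the statement as a form-version of Weyl's singular-sequence criterion, strengthened by a support condition that drives the mass to infinity. The easy direction $(\mathrm{i})$--$(\mathrm{iii}) \Rightarrow \lambda \in \sigma_\mathrm{ess}(H)$ follows by observing that $(\mathrm{iii})$ forces $\psi_n \rightharpoonup 0$ weakly in $\sii(\Omega_0)$ (compactly supported functions are dense in $\sii(\Omega_0)$ and are eventually orthogonal to $\psi_n$). Together with $(\mathrm{i})$ and $(\mathrm{ii})$ this makes $\{\psi_n\}$ a singular Weyl sequence for~$H$ in the form sense, which is the standard characterisation of membership in the essential spectrum.

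For the converse, I would start from a form Weyl sequence $\{\tilde\psi_n\} \subset \Dom h$ with $\|\tilde\psi_n\|=1$, $\tilde\psi_n \rightharpoonup 0$ in $\sii(\Omega_0)$, and $(H-\lambda)\tilde\psi_n \to 0$ in $(\Dom h)^*$. The boundedness of $h[\tilde\psi_n]$ combined with $f' \in L^\infty_\mathrm{loc}(\Real)$ and the lower bound in~\eqref{elementary} yields uniform $H^1$-bounds of $\tilde\psi_n$ on any bounded rectangle $K\times(0,d)$, and the Rellich--Kondrachov theorem then gives $\tilde\psi_n \to 0$ strongly in $\sii(K\times(0,d))$ for every fixed bounded~$K$. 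A diagonal extraction therefore allows me to assume $\|\tilde\psi_n\, 1_{(-2n,2n)\times(0,d)}\| \leq 1/n$. I would then cut off by choosing a smooth $\chi_n$ depending only on~$s$, with $\chi_n=0$ on $(-n,n)$, $\chi_n=1$ outside $(-2n,2n)$, and $\|\chi'_n\|_\infty \leq C/n$, and set $\psi_n := \chi_n\tilde\psi_n/\|\chi_n\tilde\psi_n\|$. Conditions $(\mathrm{i})$ and $(\mathrm{iii})$ are then immediate, and the diagonal bound ensures $\|\chi_n\tilde\psi_n\|\to 1$.

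To verify $(\mathrm{ii})$, I would use the Leibniz identity
\begin{equation*}
  h[\chi_n\tilde\psi_n,v] - h[\tilde\psi_n,\chi_n v]
  = (\chi'_n\tilde\psi_n,(\partial_s-f'\partial_t)v)
  - ((\partial_s-f'\partial_t)\tilde\psi_n,\chi'_n v),
\end{equation*}
valid for all $v \in \Dom h$, together with the trivial identity $(\chi_n\tilde\psi_n,v) = (\tilde\psi_n,\chi_n v)$. The pairing of $(H-\lambda)\tilde\psi_n$ against $\chi_n v$, controlled via the uniform bound $\|\chi_n v\|_{\Dom h} \leq C\|v\|_{\Dom h}$ (an~\eqref{elementary}-type estimate), tends to zero in $(\Dom h)^*$; the two commutator terms are each $O(1/n)\|v\|_{\Dom h}$ by Cauchy--Schwarz together with $\|\chi'_n\|_\infty \leq C/n$ and the boundedness of $h[\tilde\psi_n]$. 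The main obstacle is precisely this dual-norm control: since $\Dom h$ need not equal $H_0^1(\Omega_0)$ when $f'$ is unbounded (see Proposition~\ref{Prop.core}), the whole argument must remain at the level of the sesquilinear form, and the cut-off scale must be balanced so that $\|\chi'_n\|_\infty$ kills the commutator errors while the diagonal $\sii$-mass estimate on the transition annulus remains valid.
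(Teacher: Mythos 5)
Your proposal is correct, and its overall skeleton coincides with the paper's: the easy direction via the observation that (iii) forces weak convergence to zero, and the hard direction by starting from the form version of Weyl's criterion (normalised sequence, weak convergence to zero, $(H-\lambda)$-smallness in $(\Dom h)^*$), pushing the mass to infinity with an $s$-dependent cut-off, and controlling the error in the dual norm by commutator estimates of size $O(1/n)$. Where you genuinely differ is in the key step showing that no $\sii$-mass survives on bounded sets: the paper argues through the resolvent identity~\eqref{DDI} and the compactness of the operators $(1-\eta_k)(H+1)^{-1}$ in $\sii(\Omega_0)$, extracting a diagonal subsequence with $\|(1-\eta_k)\phi_{n_k}\|\to 0$, whereas you work directly on the Weyl sequence, deriving uniform $H^1$-bounds on bounded rectangles from $f'\in L^\infty_\mathrm{loc}(\Real)$ and invoking Rellich--Kondrachov plus a diagonal extraction; this is more elementary and self-contained, since the local compactness that the paper's resolvent argument implicitly relies on is exactly what you prove by hand. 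Your commutator bookkeeping is also slightly different: you keep both first-order commutator terms and use the boundedness of $h[\tilde\psi_n]$ together with the uniform bound $\|\chi_n v\|_{+1}\leq C\|v\|_{+1}$, while the paper reduces to smooth representatives $\phi_{n_k}\in C_0^\infty(\Omega_0)$, integrates by parts so that only $\eta_k'$ and $\eta_k''$ hit the Weyl sequence, and estimates the leading term against $\|(H-\lambda)\phi_{n_k}\|_{-1}$ as in~\eqref{dual.norm}; both routes close. Two small points you should make explicit: the boundedness of $\|\tilde\psi_n\|_{+1}$ is not automatic but follows from (i)--(ii) (e.g.\ $h[\tilde\psi_n]-\lambda = \langle(H-\lambda)\tilde\psi_n,\tilde\psi_n\rangle$, or via~\eqref{DDI} as in the paper), and the local $H^1$-bound is not literally the lower bound of~\eqref{elementary} (whose constant involves $\|f'\|_{L^\infty(\supp\psi)}$) but rather the combination of $\|\partial_t\psi\|^2\leq h[\psi]$ with the triangle inequality $\|\partial_s\psi\|_{\sii(K\times(0,d))}\leq\|\partial_s\psi-f'\partial_t\psi\|+\|f'\|_{L^\infty(K)}\|\partial_t\psi\|$; likewise, a one-line density argument (Proposition~\ref{Prop.core}) is needed to justify that $\chi_n$ maps $\Dom h$ into itself and that your Leibniz identity holds for all of $\Dom h$. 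None of these affects the validity of the argument.
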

\begin{proof}
By the general Weyl criterion modified to quadratic forms 
(\cf~\cite[Lem.~4.1]{DDI} or \cite[Thm.~5]{KL}
where the latter contains a proof),
$\lambda$ belongs to the essential spectrum of~$H$
if, and only if, there exists a sequence 
$\{\phi_n\}_{n=1}^\infty \subset \Dom h$ such that~(i) and~(ii) hold
but~(iii) is replaced by 
\begin{enumerate}
\item[(iii')]
$\phi_n \to 0$ as $n \to \infty$ weakly in $\sii(\Omega_0)$.
\end{enumerate}
The sequence $\{\psi_n\}_{n=1}^\infty$
satisfying~(i) and~(iii) is clearly weakly converging to zero. 
Hence, one implication of the lemma is obvious.
Conversely, let us assume that there exists 
a sequence $\{\phi_n\}_{n=1}^\infty \subset \Dom h$
satisfying~(i), (ii) and~(iii')
and let us construct from it a sequence $\{\psi_n\}_{n=1}^\infty \subset \Dom h$   
satisfying~(i), (ii) and~(iii).
Since $C_0^\infty(\Omega_0)$ is the form core of~$H$
(\cf~Proposition~\ref{Prop.core}),
we may assume that $\{\phi_n\}_{n\in\Nat} \subset C_0^\infty(\Omega_0)$.

Let us denote the norms of $\Dom h$ and $(\Dom h)^*$ by
$\|\cdot\|_{+1}$ and $\|\cdot\|_{-1}$, respectively.
One has $\|\cdot\|_{\pm 1} = \|(H+1)^{\pm 1/2}\cdot\|$.
By writing
\begin{equation}\label{DDI}
  \phi_n = (H+1)^{-1} (H-\lambda) \phi_n + (\lambda+1) (H+1)^{-1}\phi_n
\end{equation}
and using~(ii),
we see that the sequence $\{\phi_n\}_{n=1}^\infty$ 
is bounded in $\Dom h$.

Let $\eta \in C^\infty(\Real)$ be such that $0 \leq\eta\leq 1$,
$\eta=0$ on $[-1,1]$ and $\eta=1$ on $\Real\setminus(-2,2)$. 
For every $k \geq 1$, we set $\eta_k(s) := \eta(s/k)$
and we keep the same notation for the function 
$\eta_k \otimes 1$ on $\Real \times (0,d)$,
and similarly for its derivatives~$\eta_k'$ and~$\eta_k''$.
Clearly, $\supp \eta_k \subset \Omega_0 \setminus (-k,k)\times(0,d)$. 
For every $k \geq 1$, the operator $(1-\eta_k)(H+1)^{-1}$
is compact in $\sii(\Omega_0)$.
By virtue of the weak convergence~(iii'), 
it follows that, for every $k \geq 1$, 
$(1-\eta_k)(H+1)^{-1} \phi_n \to 0$ as $n \to \infty$
in $\sii(\Omega_0)$.
Then there exists a subsequence 
$\{\phi_{n_k}\}_{k=1}^\infty$ of $\{\phi_n\}_{n=1}^\infty$ 
such that $(1-\eta_k)(H+1)^{-1} \phi_{n_k} \to 0$ as $k \to \infty$
in $\sii(\Omega_0)$. 
Consequently,
the identity~\eqref{DDI} together with~(ii)
implies that $(1-\eta_k)\phi_{n_k} \to 0$ as $k \to \infty$ in $\sii(\Omega_0)$.
It follows that $\eta_k \phi_{n_k}$ can be normalised
for all sufficiently large~$k$. More specifically, 
redefining the subsequence if necessary, we may assume that
$\|\eta_k \phi_{n_k}\| \geq 1/2$ for all $k \geq 1$.
We set
$$
  \psi_k := \frac{\eta_k \phi_{n_k}}{\|\eta_k \phi_{n_k}\|}
$$
and observe that $\{\psi_k\}_{k=1}^\infty$ satisfies~(i) and~(iii). 

It remains to verify~(ii).
To this purpose, we notice that, using the duality, (ii)~means
\begin{equation}\label{dual.norm}
  \|(H-\lambda)\phi_n\|_{-1}
  = \sup_{\stackrel[\varphi\not=0]{}{\varphi\in\Dom h}} 
  \frac{|h(\varphi,\phi_n)-\lambda(\varphi,\phi_n)|}{\|\varphi\|_{+1}}
  \xrightarrow[n \to \infty]{} 0
  \,,
\end{equation}
where $\|\varphi\|_{+1}^2 = h[\varphi] + \|\varphi\|^2$.
By the direct computation employing integrations by parts,
one can check the identity 
\begin{multline*}
  h(\varphi,\eta_k \phi_{n_k}) - \lambda (\varphi,\eta_k \phi_{n_k})
  \\
  = h(\eta_k\varphi,\phi_{n_k}) - \lambda (\eta_k\varphi,\phi_{n_k})
  + 2(\partial_s\varphi-f'\partial_t\varphi,\eta_k'\phi_{n_k})
  + (\varphi,\eta_k''\phi_{n_k})
\end{multline*}
for every test function $\varphi \in C_0^\infty(\Omega_0)$,
a core of $\Dom h$.
Using~\eqref{dual.norm}, we have
$$
\begin{aligned}
  \sup_{\stackrel[\varphi\not=0]{}{\varphi\in C_0^\infty(\Omega_0)}} 
  \frac{|h(\eta_k\varphi,\phi_{n_k})
  -\lambda(\eta_k\varphi,\phi_{n_k})|}{\|\varphi\|_{+1}}
  &\leq \sup_{\stackrel[\eta_k\varphi\not=0]{}{\varphi\in C_0^\infty(\Omega_0)}} 
  \frac{|h(\eta_k\varphi,\phi_{n_k})
  -\lambda(\eta_k\varphi,\phi_{n_k})|}{\|\eta_k\varphi\|_{+1}}
  \\
  &= \|(H-\lambda)\phi_{n_k}\|_{-1}
  \xrightarrow[k \to \infty]{} 0
  \,.
\end{aligned}
$$
At the same time, using the Schwarz inequality
and the estimates 
$
  \|\partial_s\varphi-f'\partial_t\varphi\|^2 \leq h[\varphi] 
  \leq \|\varphi\|_{+1}^2
$,
we get
$$
\begin{aligned}
  \sup_{\stackrel[\varphi\not=0]{}{\varphi\in C_0^\infty(\Omega_0)}} 
  \frac{|2(\partial_s\varphi-f'\partial_t\varphi,\eta_k'\phi_{n_k})|}
  {\|\varphi\|_{+1}}
  &\leq \|\eta_k'\|_\infty \|\phi_{n_k}\|
  \,,
\end{aligned}
$$
where $\|\eta_k'\|_\infty$ denotes the supremum norm of~$\eta_k'$.
In view of the normalisation~(i) 
and since $\|\eta_k'\|_\infty = k^{-1} \|\eta'\|_\infty$,
we see that also this term tends to zero as $k \to \infty$.
Finally, using the Schwarz inequality 
and the estimate $\|\varphi\| \leq \|\varphi\|_{+1}$,
we have
\begin{equation*}
  \sup_{\stackrel[\varphi\not=0]{}{\varphi\in C_0^\infty(\Omega_0)}} 
  \frac{|(\varphi,\eta_k''\phi_{n_k})|}
  {\|\varphi\|_{+1}}
  \leq \|\eta_k''\|_\infty \|\phi_{n_k}\|
  = k^{-2} \|\eta''\|_\infty 
  \xrightarrow[k \to \infty]{} 0
  \,.
\end{equation*}
Summing up, we have just checked 
$\|(H-\lambda)(\eta_k \phi_{n_k})\|_{-1} \to 0$ as $k\to \infty$.
Recalling $\eta_k \phi_{n_k} \geq 1/2$,
the desired property~(ii) for $\{\psi_{k}\}_{k=1}^\infty$ follows. 
\end{proof}

Using this lemma, we immediately arrive at the following
``decomposition principle'' (saying that the essential spectrum
is determined by the behaviour at infinity only).

\begin{Proposition}\label{Prop.decomposition}
If~\eqref{Ass} holds with $\beta\in\Real$, then 
$
  \sigma_\mathrm{ess}(H_{\beta+\eps}) 
  = \sigma_\mathrm{ess}(H_{\beta}) 
$.
\end{Proposition}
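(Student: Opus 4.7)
The approach is to invoke Lemma~\ref{Lem.ess} in both directions, exploiting that since $\eps(x) \to 0$ as $|x| \to \infty$, the forms $h_\beta$ and $h_{\beta+\eps}$ coincide asymptotically. Consequently, any Weyl-type sequence for one operator whose supports escape to infinity is automatically a Weyl-type sequence for the other.

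First, since $\eps \in L_\mathrm{loc}^\infty(\Real)$ tends to zero at infinity, the function $\eps$ is bounded on $\Real$ and hence $f' = \beta + \eps \in L^\infty(\Real)$. By Proposition~\ref{Prop.core}, both form domains $\Dom h_\beta$ and $\Dom h_{\beta+\eps}$ coincide with $H_0^1(\Omega_0)$, and by~\eqref{elementary} the two graph norms are equivalent and equivalent in turn to the $H^1(\Omega_0)$ norm. The associated dual norms $\|\cdot\|_{-1,\beta}$ and $\|\cdot\|_{-1,\beta+\eps}$ are consequently equivalent as well.

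Next, given $\lambda \in \sigma_\mathrm{ess}(H_\beta)$, let $\{\psi_n\} \subset H_0^1(\Omega_0)$ be a sequence satisfying conditions (i), (ii), (iii) of Lemma~\ref{Lem.ess} for $H_\beta$; as noted after~\eqref{DDI}, this sequence is bounded in $H^1(\Omega_0)$. Properties (i) and (iii) transfer verbatim to the $H_{\beta+\eps}$ setting, so only (ii) requires verification. For any test function $\varphi \in C_0^\infty(\Omega_0)$, a direct bilinear computation yields
\begin{multline*}
  h_{\beta+\eps}(\varphi,\psi_n) - h_\beta(\varphi,\psi_n)
  = - (\partial_s\varphi - \beta\partial_t\varphi,\eps\partial_t\psi_n)
  \\
  - (\eps\partial_t\varphi,\partial_s\psi_n - \beta\partial_t\psi_n)
  + (\eps\partial_t\varphi,\eps\partial_t\psi_n).
\end{multline*}
Because $\supp\psi_n \subset \Omega_0 \setminus (-n,n)\times(0,d)$, the Schwarz inequality bounds each term by a constant multiple of $\sup_{|s| \geq n}|\eps(s)| \cdot \|\varphi\|_{H^1} \cdot \|\psi_n\|_{H^1}$, which vanishes as $n \to \infty$ uniformly in $\varphi$ on the unit ball of $H^1(\Omega_0)$. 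Combined with the already established property (ii) for $H_\beta$ and the equivalence of dual norms established above, this gives $\|(H_{\beta+\eps} - \lambda)\psi_n\|_{-1,\beta+\eps} \to 0$, so Lemma~\ref{Lem.ess} yields $\lambda \in \sigma_\mathrm{ess}(H_{\beta+\eps})$.

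The reverse inclusion follows by the symmetric argument, with the roles of $H_\beta$ and $H_{\beta+\eps}$ interchanged, using $\beta = (\beta+\eps) + (-\eps)$ and the fact that $-\eps \to 0$ at infinity. The main technical step is the bilinear-form identity above and the careful control of its three terms on sequences supported at infinity; this is a matter of bookkeeping rather than a genuine obstacle, since the small-$\eps$-at-infinity estimates ride on the already-controlled $H^1$-norms of $\psi_n$ and $\varphi$.
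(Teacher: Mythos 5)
Your proposal is correct and follows essentially the same route as the paper: both apply Lemma~\ref{Lem.ess} to a Weyl sequence with supports escaping to infinity, expand the bilinear-form difference $h_{\beta+\eps}(\varphi,\psi_n)-h_\beta(\varphi,\psi_n)$ (your three-term split is just the paper's two-term identity with the last term left uncombined), and control it via $\|\eps\|_{L^\infty(\Real\setminus(-n,n))}\to 0$ together with the $H^1$-boundedness of $\psi_n$, then argue the reverse inclusion symmetrically. Your explicit remark that $\eps$ is bounded, so both form domains equal $H_0^1(\Omega_0)$ with equivalent graph and dual norms, merely makes precise a point the paper leaves implicit.
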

\begin{proof}
Let $\lambda \in \sigma_\mathrm{ess}(H_{\beta})$.
By Lemma~\ref{Lem.ess}, there exists a sequence 
$\{\psi_n\}_{n=1}^\infty$ satisfying the properties (i)--(iii) 
with~$H$ being replaced by~$H_\beta$.
One easily checks the identity
\begin{equation}\label{id.decomposition}
  h_{\beta+\eps}(\varphi,\psi_n)
  = h_{\beta}(\varphi,\psi_n)
  - (\partial_s\varphi-\beta\partial_t\varphi,\eps\partial_t\psi_n)
  - (\eps\partial_t\varphi,\partial_s\psi_n-(\beta+\eps)\partial_t\psi_n)
\end{equation}
for every test function 
$
  \varphi \in \Dom h_{\beta}
  = H_0^1(\Omega_0)
  = \Dom h_{\beta+\eps}
$
(\cf~Proposition~\ref{Prop.core}).
Now we proceed similarly as in the end of the proof of Lemma~\ref{Lem.ess}.
Using the estimates 
$
  \|\partial_s\varphi-\beta\partial_t\varphi\|^2
  \leq h[\varphi]
  \leq \|\varphi\|_{+1}
$
and the Schwarz inequality,
we have
$$
\begin{aligned}
  \sup_{\stackrel[\varphi\not=0]{}{\varphi\in H_0^1(\Omega_0)}} 
  \frac{|(\partial_s\varphi-\beta\partial_t\varphi,\eps\partial_t\psi_{n})|}
  {\|\varphi\|_{+1}}
  &\leq \|\eps\partial_t\psi_{n}\|
  \leq \|\eps\|_{L^\infty(\Real\setminus(-n,n))} \|\partial_t\psi_{n}\|
  \xrightarrow[n \to \infty]{}
  0
  \,.
\end{aligned}
$$
Recall that $\{\psi_n\}_{n=1}^\infty$ is bounded in $H_0^1(\Omega_0)$,
\cf~\eqref{DDI} and Proposition~\ref{Prop.core}.
Similarly, using $\|\partial_t\varphi\| \leq \|\varphi\|_{+1}$
and the Schwarz inequality, we obtain
$$
\begin{aligned}
  \sup_{\stackrel[\varphi\not=0]{}{\varphi\in H_0^1(\Omega_0)}} 
  \frac{|(\eps\partial_t\varphi,\partial_s\psi_n-(\beta+\eps)\partial_t\psi_n)|}
  {\|\varphi\|_{+1}}
  &\leq \|\eps\partial_s\psi_{n}\| 
  + \|\beta+\eps\|_\infty \|\eps\partial_t\psi_{n}\|
  \\
  &\leq \|\eps\|_{L^\infty(\Real\setminus(-n,n))} 
  (\|\partial_s\psi_{n}\| 
  + \|\beta+\eps\|_\infty \|\partial_t\psi_{n}\|)
  \\
  & \xrightarrow[n \to \infty]{}
  0
  \,.
\end{aligned}
$$
Consequently, since $\|(H_{\beta}-\lambda)\psi_n\|_{-1} \to 0$
as $n\to\infty$ due to~(ii), 
it follows from~\eqref{id.decomposition} and~\eqref{dual.norm} that also
$\|(H_{\beta+\eps}-\lambda)\psi_n\|_{-1} \to 0$
as $n \to \infty$. 
Hence $\lambda \in \sigma_\mathrm{ess}(H_{\beta+\eps})$.

The opposite inclusion 
$
  \sigma_\mathrm{ess}(H_{\beta+\eps}) 
  \subset \sigma_\mathrm{ess}(H_{\beta}) 
$
is proved analogously.
\end{proof}

It remains to determine the (essential) spectrum 
for the constant shear.

\begin{Proposition}\label{Prop.Fourier}
One has
$
  \sigma(H_{\beta}) 
  = \sigma_\mathrm{ess}(H_{\beta}) 
  = [E_1(\beta),\infty)
$.
\end{Proposition}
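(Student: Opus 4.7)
The plan exploits the fact that for the constant shear $f' \equiv \beta$, the form
$$h_\beta[\psi] = \|\partial_s\psi - \beta\partial_t\psi\|^2 + \|\partial_t\psi\|^2$$
has coefficients independent of the longitudinal variable $s$, so $H_\beta$ is translation invariant in $s$. I would therefore apply the partial Fourier transform $\mathcal{F}_s$ in $s$, which is unitary on $\sii(\Omega_0) \simeq \sii(\Real;\sii((0,d)))$, to realise $H_\beta$ as the constant-fibre direct integral $\mathcal{F}_s H_\beta \mathcal{F}_s^{-1} = \int_\Real^\oplus h(\xi)\, \der\xi$. The fibre $h(\xi)$ is the self-adjoint operator on $\sii((0,d))$ with Dirichlet boundary conditions associated with the quadratic form $q_\xi[\varphi] := \|i\xi\varphi - \beta\varphi'\|^2 + \|\varphi'\|^2$, formally acting as $-(1+\beta^2)\partial_t^2 + 2i\beta\xi\,\partial_t + \xi^2$.

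The second step is to diagonalise each fibre by completing the square in $\partial_t$. Setting $a := \beta\xi/(1+\beta^2)$ and using the identity $(\partial_t - ia)^2 = e^{iat}\partial_t^2 e^{-iat}$, one checks the factorisation
$$h(\xi) = -(1+\beta^2)(\partial_t - ia)^2 + \frac{\xi^2}{1+\beta^2}.$$
Consequently the unitary gauge transformation $\varphi(t) \mapsto e^{iat}\varphi(t)$, which preserves Dirichlet boundary conditions, conjugates $h(\xi)$ to the ordinary Dirichlet Laplacian $-(1+\beta^2)\partial_t^2 + \xi^2/(1+\beta^2)$ on $\sii((0,d))$, whose spectrum is the explicit sequence $\{(1+\beta^2)(n\pi/d)^2 + \xi^2/(1+\beta^2) : n \geq 1\}$. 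In particular, the lowest band has eigenvalue $E_1(\beta) + \xi^2/(1+\beta^2)$.

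Taking the closure of the union of these fibre spectra over $\xi \in \Real$ immediately yields $\sigma(H_\beta) = [E_1(\beta),\infty)$. Since the lowest-band eigenvalue $\xi \mapsto E_1(\beta) + \xi^2/(1+\beta^2)$ already sweeps $[E_1(\beta),\infty)$ continuously and non-constantly as $\xi$ ranges over $\Real$, every spectral value is a non-isolated eigenvalue of infinite multiplicity of the direct integral; in particular the spectrum is purely absolutely continuous and $\sigma(H_\beta) = \sigma_\mathrm{ess}(H_\beta)$. The only step requiring technical care is the justification of the direct integral decomposition at the form level, which is the main (but still routine) obstacle; this follows from the $s$-translation invariance of $h_\beta$ together with the identification $\Dom h_\beta = H_0^1(\Omega_0)$ guaranteed by Proposition~\ref{Prop.core} since $\beta$ is bounded.
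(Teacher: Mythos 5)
Your proposal is correct and follows essentially the same route as the paper: partial Fourier transform in $s$, reduction to the Dirichlet fibre operators $-(i\xi-\beta\partial_t)^2-\partial_t^2$ on $\sii((0,d))$, completion of the square and gauging away the constant ``magnetic'' term $e^{iat}$, yielding the band functions $(1+\beta^2)E_n+\xi^2/(1+\beta^2)$ and hence $\sigma(H_\beta)=[E_1(\beta),\infty)$. Only one phrase needs repair: points in the range of a non-constant band function are \emph{not} eigenvalues of the direct integral (there are no $\sii$ eigenfunctions); the identity $\sigma=\sigma_{\mathrm{ess}}$ follows either from the absolute continuity coming from the non-constant analytic bands, or simply from the fact that the half-line $[E_1(\beta),\infty)$ has no isolated points, so the discrete spectrum is empty.
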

\begin{proof}
By performing the partial Fourier transform 
in the longitudinal variable $s \in \Real$,
one has the unitary equivalence
$$
  H_\beta = -(\partial_s-\beta\partial_t)^2 - \partial_t^2 
  \cong -(i\xi-\beta\partial_t)^2 - \partial_t^2 
  \,,
$$
where $\xi \in \Real$ is the dual variable in the Fourier image.
Consequently, 
\begin{equation}\label{Fourier}
  \sigma(H_\beta) = \bigcup_{\xi \in \Real} \sigma(T_\beta(\xi))
  \,,
\end{equation}
where, for each fixed $\xi\in\Real$, 
$T_\beta(\xi)$ is the operator in $\sii((0,d))$
that acts as $-(i\xi-\beta\partial_t)^2 - \partial_t^2$
and satisfies Dirichlet boundary conditions.
The spectral problem for $T_\beta(\xi)$ can be solved explicitly. 
Alternatively,
one can proceed by ``completing the square'' 
and ``gauging out a constant magnetic field'' as follows:
$$
\begin{aligned}
  -(i\xi-\beta\partial_t)^2 - \partial_t^2
  &= (1+\beta^2) (- \partial_t^2)
  + 2i\xi\beta \partial_t
  + \xi^2
  \\
  &= - (1+\beta^2) \left(\partial_t-i \frac{\xi\beta}{1+\beta^2}\right)^2
  - \frac{\xi^2\beta^2}{1+\beta^2} + \xi^2	
  \\
  &= - (1+\beta^2) \left(\partial_t-i \frac{\xi\beta}{1+\beta^2}\right)^2
  + \frac{\xi^2}{1+\beta^2}
  \\
  &= 
  e^{iFt}
  \left[ - (1+\beta^2) \partial_t^2 + \frac{\xi^2}{1+\beta^2} \right]
  e^{-iFt}
  \,,
\end{aligned}
$$
where 
$$
  F := \frac{\xi\beta}{1+\beta^2}
  \,.
$$
Consequently, 
\begin{equation}\label{square} 
  \sigma(T_\beta(\xi)) = \left\{
  (1+\beta^2) E_n + \frac{\xi^2}{1+\beta^2}
  \right\}_{n=1}^\infty
  \,,
\end{equation}
where $E_n = n^2 E_1$ denotes the $n^\mathrm{th}$ eigenvalue
of the Dirichlet Laplacian in $\sii((0,d))$.
As a consequence of~\eqref{Fourier} and~\eqref{square},
we get the desired claim. 
\end{proof}

For finite~$\beta$,
Theorem~\ref{Thm.ess} follows as a direct consequence
of Propositions~\ref{Prop.decomposition} and~\ref{Prop.Fourier}.

\subsection{Infinite limits}
Let us now prove Theorem~\ref{Thm.ess} in the case $\beta = \pm\infty$.
Here the proof is based on the following purely geometric fact.

\begin{Proposition}\label{Prop.volume}
Let $f' \in L_\mathrm{loc}^\infty(\Real)$ satisfy~\eqref{Ass}
with $\beta = \pm\infty$. Then
\begin{equation}\label{volume} 
  \limsup_{\stackrel[\mathrm{x}\in\Omega]{}{|\mathrm{x}|\to\infty}} 
  |B_1(\mathrm{x}) \cap \Omega| = 0
  \,,
\end{equation}
where $B_1(\mathrm{x})$ denotes the disk of radius~$1$ 
centred at $\mathrm{x}\in\Real^2$
and~$|\cdot|$ stands for the Lebesgue measure.   
\end{Proposition}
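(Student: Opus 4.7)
The plan is to exploit the fact that when $|f'|$ is very large on a neighborhood of the horizontal coordinate of $\mathrm{x}$, the sheared strip~$\Omega$ becomes so steeply tilted that it can only cross a unit disk along a very short horizontal range.

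First I would write $\mathrm{x} = (x_0,y_0) \in \Omega$, so that $y_0 \in (f(x_0), f(x_0)+d)$. Since $f' \in L_\mathrm{loc}^\infty(\Real)$ implies that~$f$ is locally Lipschitz, hence locally bounded, the condition $|\mathrm{x}| \to \infty$ forces $|x_0| \to \infty$ (if $x_0$ stayed in a bounded set then so would $y_0$). Combined with~\eqref{Ass} with $\beta = \pm\infty$, this means: for every $M > 0$ there exists $R > 0$ such that $|f'(s)| \geq M$ for every $|s| \geq R$, and for all~$\mathrm{x}$ with $|\mathrm{x}|$ sufficiently large we have $[x_0-1,x_0+1] \subset \{|s| \geq R\}$.

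Next I would estimate the Lebesgue measure by Fubini. A point $(x,y) \in B_1(\mathrm{x}) \cap \Omega$ satisfies $|x-x_0| \leq 1$ and belongs to the vertical segment $(f(x), f(x)+d) \cap (y_0-1, y_0+1)$, so its vertical cross section has length at most~$d$ (or $2$, whichever is smaller). For this intersection to be non-empty one needs both $f(x) < y_0+1$ and $f(x)+d > y_0-1$; since $y_0 \in (f(x_0), f(x_0)+d)$, this forces $|f(x)-f(x_0)| < d+1$. Because~$f$ is locally Lipschitz, the classical mean value theorem gives $f(x)-f(x_0) = f'(\xi)(x-x_0)$ for some $\xi \in [x_0-1,x_0+1]$, so the lower bound on $|f'|$ on this interval yields $|f(x)-f(x_0)| \geq M|x-x_0|$. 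Hence the horizontal range of $B_1(\mathrm{x}) \cap \Omega$ has length at most $2(d+1)/M$.

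Combining these two bounds I would conclude
\begin{equation*}
  |B_1(\mathrm{x}) \cap \Omega| \leq \frac{2d(d+1)}{M}
\end{equation*}
for all~$\mathrm{x} \in \Omega$ with $|\mathrm{x}|$ large enough (depending on~$M$). Since $M$ was arbitrary, this gives~\eqref{volume}. The argument is essentially bookkeeping; the only mild subtlety is justifying the use of MVT with the integral representation of~$f$ (guaranteed by the local Lipschitz property coming from $f' \in L_\mathrm{loc}^\infty(\Real)$) and noting that the sign of~$f'$ near infinity is irrelevant because the $d+1$ bound tolerates both orientations of the tilt. I do not expect a genuine obstacle here — the geometric content (thin quasi-bounded cross sections at infinity) translates directly into the analytic estimate.
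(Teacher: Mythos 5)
Your argument is correct and rests on the same mechanism as the paper's proof: the divergence of $f'$ combined with the mean value theorem forces $B_1(\mathrm{x})\cap\Omega$ to be horizontally thin (width $O(1/M)$) while its vertical fibres have length at most~$d$, so the measure tends to zero. The only difference is bookkeeping — you slice vertically via Fubini and bound the admissible abscissas through $|f(x)-f(x_0)|<d+1$, whereas the paper parametrizes by the height $\eta$ and applies the mean value theorem to the two boundary crossings $s_1(\eta),s_2(\eta)$ of horizontal lines (which requires first observing that $f$ is eventually strictly monotone and treating the two ends separately); your version is, if anything, slightly tidier on these points.
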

\begin{proof}
Recall the diffeomorphism $\mathscr{L}:\Omega_0\to\Omega$ given by~\eqref{tube}.
For every number $\eta \in \Real$,
let $\Sigma(\eta)$ denote the set of points 
intersecting the boundary~$\partial\Omega$
with the straight horizontal half-line $(0,\infty) \times \{\eta\}$.
Since $f'(s) \to \pm\infty$ as $|s|\to\infty$, 
it follows that there exists a positive constant~$s_0$
such that~$f$ is either strictly increasing or strictly decreasing on $(s_0,\infty)$.
Let us consider the former situation, the latter can be treated analogously. 
It follows that there exists a positive constant~$\eta_0$ 
such that, for all $\eta \geq \eta_0$,
the set~$\Sigma(\eta)$ consists of just two points
$\mathrm{x}_1(\eta)=\mathscr{L}(s_1(\eta),d)$ 
and $\mathrm{x}_2(\eta)=\mathscr{L}(s_2(\eta),0)$
with $1 < s_1(\eta) < s_2(\eta)$.
Let $\mathrm{x}=\mathscr{L}(s,t)$ be a point in~$\Omega$ 
such that~$s$ is positive  
and the vertical component satisfies the inequality $\eta := f(s)+t \geq 1 + \eta_0$
(notice that $|\mathrm{x}| \to \infty$ if, and only if, $|s| \to \infty$).
Then we have the crude estimate
$$
  |B_1(\mathrm{x}) \cap \Omega|
  \leq 2 \, |s_2(\eta+1) - s_1(\eta-1)|
  \,.
$$
By the mean value theorem,
$$
  2 = f(s_2(\eta+1)) - f(s_1(\eta-1))
  = f'(\xi_\eta) \, |s_2(\eta+1) - s_1(\eta-1)|
  \,,
$$
where $\xi_\eta \in (s_1(\eta-1),s_2(\eta+1))$.
Since $s_1(\eta-1) \to \infty$ as $s \to \infty$,
we also have $\xi_\eta \to \infty$,
and therefore $|B_1(\mathrm{x}) \cap \Omega| \to 0$ as $s \to \infty$.
The case $s \to -\infty$ can be treated analogously.
\end{proof}

By the Berger-Schechter criterion 
(\cf~\cite[Thm.~V.5.17 \& Rem.~V.5.18.4]{Edmunds-Evans}),
it follows from~\eqref{volume}
that the embedding $W_0^{1,2}(\Omega) \hookrightarrow L^2(\Omega)$ is compact.
Consequently, the Dirichlet Laplacian $-\Delta_D^\Omega$ 
has no essential spectrum if $\beta = \pm\infty$
and Theorem~\ref{Thm.ess} is proved.

\begin{Remark}\label{Rem.q-bounded}
A necessary condition for~\eqref{volume} is that 
\begin{equation*}
  \lim_{\stackrel[\mathrm{x}\in\Omega]{}{|\mathrm{x}|\to\infty}} 
  \dist(\mathrm{x},\partial\Omega) = 0
  \,,
\end{equation*}
which means that~$\Omega$ is a \emph{quasi-bounded} domain 
(\cite[Sec.~X.6.1]{Edmunds-Evans}) in the regime $\beta = \pm\infty$.
\end{Remark}
%

\section{Existence of discrete spectrum}\label{Sec.disc}
%
In view of Theorem~\ref{Thm.ess},
the conclusion~\eqref{less} of Theorem~\ref{Thm.disc} guarantees
that any of the conditions~(i) or~(ii) implies 
that $-\Delta_D^\Omega$ possesses
at least one isolated eigenvalue of finite multiplicity below~$E_1(\beta)$. 
Let us establish these sufficient conditions.

\begin{proof}[Proof of Theorem~\ref{Thm.disc}]
Given any positive~$n$,
let $\varphi_n:\Real\to[0,1]$ be the continuous function
such that $\varphi_n=1$ on $[-n,n]$,
$\varphi_n=0$ on $\Real \setminus (-2n,2n)$
and linear in the remaining intervals.
Set 
$$
  \psi_n(s,t) := \varphi_n(s) \chi_1(t)
$$
and observe that $\psi_n \in \Dom(h) \cap H_0^1(\Omega_0)$.
Moreover, the identity of Lemma~\ref{Lem.fine} 
remains valid for~$\psi_n$ by approximation. 
Indeed, it is only important to notice that
$\chi_1^{-1}\psi_n \in \sii(\Omega_0)$
as a consequence of \cite[Thm.~1.5.6]{Davies_1989}.
Consequently,
\begin{equation*}
\begin{aligned}
  h[\psi_n] - E_1(\beta) \|\psi_n\|^2
  &=  \|\partial_s\psi_n
  - \eps \partial_t \psi_n\|^2 
  + \int_{\Omega_0} \beta\,\eps(s)\left[
  E_1 + \left(\frac{\chi_1'(t)}{\chi_1(t)}\right)^2
  \right]
  |\psi_n(s,t)|^2 \, \der s \, \der t
  \\
  &= \|\varphi_n'\|_{\sii(\Real)}^2
  + E_1 \int_{\Real} \left[\eps(s)^2 + 2 \beta \eps(s)\right]
  |\varphi_n(s)|^2 \, \der s
  \,,
\end{aligned}
\end{equation*}
where the second equality follows by the special form of~$\psi_n$
(the cross term vanishes due to an integration by parts with respect to~$t$)
and the formula $\|\chi_1'\|_{\sii((0,d))}^2=E_1$
together with the normalisation of~$\chi_1$.
Noticing that $\|\varphi_n'\|_{\sii(\Real)}^2=2n^{-1}$
and using the dominated convergence theorem,
we get
$$
  h[\psi_n] - E_1(\beta) \|\psi_n\|^2
  \xrightarrow[n\to\infty]{}
  E_1 \int_{\Real} \left[\eps(s)^2 + 2 \beta \eps(s)\right]
  \der s
  \,,
$$
where the right-hand side is negative by the hypothesis~(i).
Consequently, there exists a positive~$n$ such that
$h[\psi_n] - E_1(\beta) \|\psi_n\|^2 < 0$,
so the desired result~\eqref{less} follows by 
the variational characterisation of the lowest point
in the spectrum of~$H$. 

Now assume~(ii), in which case the shifted form
$
  h_1[\psi_n] := 
  h[\psi_n] - E_1(\beta) \|\psi_n\|^2
$ 
converges to zero as $n \to \infty$.
Then we modify the test function~$\psi_n$ 
by adding a small perturbation:
$$
  \psi_{n,\delta}(s,t) := \psi_n(s,t) + \delta \, \phi(s,t)
  \qquad \mbox{with} \qquad
  \phi(s,t) := \xi(s) \, t \, \chi_1(t)
  \,,
$$
where~$\delta$ is a real number 
and $\xi \in C_0^\infty(\Real)$ is a real-valued function
to be determined later.
Writing 
\begin{equation}\label{perturbation}
\begin{aligned}
  \lim_{n\to\infty} h_1[\psi_{n,\delta}] 
  &= \lim_{n\to\infty} \left(
  h_1[\psi_n] + 2\delta \, h_1(\psi_n,\phi) + \delta^2 \, h_1[\phi]
  \right)
  \\
  &= 2\delta \lim_{n\to\infty} h_1(\psi_n,\phi) + \delta^2 \, h_1[\phi]
  \,,
\end{aligned}
\end{equation}
it is enough to show, in order to establish~\eqref{less}, 
that the limit of $h_1(\psi_n,\phi)$ as $n\to\infty$ is non-zero
for a suitable choice of~$\xi$.
Indeed, it then suffices to choose~$\delta$ sufficiently small
and of suitable sign to make the second line of~\eqref{perturbation} negative,
and subsequently choose~$n$ sufficiently large to make 
the whole expression $h_1[\psi_{n,\delta}]$ negative. 
Employing Lemma~\ref{Lem.fine} 
and the fact that $\varphi_n=1$ on $\supp\xi$ 
for all sufficiently large~$n$, 
we have
\begin{equation*}
\begin{aligned}
  \lim_{n\to\infty} h_1(\psi_n,\phi) 
  = \ & \int_{\Omega_0}
  \Big\{
  \chi_1(t) \chi_1'(t) \, [\eps(s)^2\xi(s) + \beta \eps(s) \xi(s)]
  - \chi_1(t) \chi_1'(t) t \, \eps(s)\xi'(s)
  \\
  & + \chi'(t)^2 t \, [\eps(s)^2\xi(s)+\beta \eps(s) \xi(s)]
  + \chi_1(t)^2 t \, E_1 \beta \eps(s) \xi(s)
  \Big\}
  \, \der s \, \der t
  \\
  = \ &
  \frac{1}{2} \int_{\Real} 
  \left\{
  \eps(s)\xi'(s) + E_1 \, d \, [\eps(s)^2+2\beta\eps(s)] \, \xi(s)
  \right\}
  \der s
  \\
  = \ &
  \frac{1}{2} \int_{\Real} 
  \left\{
  -\eps'(s) + E_1 \, d \, [\eps(s)^2+2\beta\eps(s)] 
  \right\}
  \xi(s)
  \, \der s
  \,.
\end{aligned}
\end{equation*}
Here the last equality follows by an integration by parts
employing the extra hypothesis that the derivative~$\eps'$ 
exists as a locally integrable function. 
Now let assume, by contradiction, that the last integral
equals zero for all possible choices of~$\xi$.
Then~$\eps$ must solve the differential equation
$
  -\eps'(s) + E_1 \, d \, [\eps(s)^2+2\beta\eps(s)] = 0
$,
which admits the explicit one-parametric class of solutions
$$
  \eps_c(s) = \frac{2\beta}{c \, e^{-2 E_1 d \beta s}-1} 
  \,, \qquad 
  c \in \Real
  \,.
$$ 
The solutions~$\eps_c$ for non-zero~$c$ are not admissible
because $f'^2-\beta$ is either a positive (if $c>0$) 
or negative (if $c<0$) function, so its integral cannot be equal to zero 
(for positive~$c$ the function~$\eps_c$ additionally admits
a singularity at $s=-(2 E_1 d \beta)^{-1}\log c$),
while~$\eps_0$ is excluded by the requirement that~$f'$ is not constant. 
Hence, there exists a compactly supported~$\xi$ such that
the limit of $h_1(\psi_n,\phi)$ as $n\to\infty$ is non-zero
and the proof is concluded by the argument described below~\eqref{perturbation}. 
\end{proof}
%

\section{Strong shearing}\label{Sec.schema}
%
In this last section we prove Theorem~\ref{Thm.schema}.
Accordingly, let us assume that 
the shear admits the special form $f'(x) = \beta + \alpha \eps(x)$,
where $\alpha, \beta \in \Real$ and $\eps:\Real \to \Real$ 
is a function such that
$\supp \eps \subset [0,1]$ and $c_1 \leq \eps(x) \leq c_2$
for all $x\in[0,1]$ with some positive constants $c_1,c_2$.
We divide the proof into several subsections.

\subsection{Preliminaries}
If~$\beta$ is non-negative (respectively, non-positive),
then the desired stability result~\eqref{stability} 
obviously holds for all $\alpha \geq 0$ (respectively, $\alpha \leq 0$)
due to the repulsiveness of the shear (\cf~Theorem~\ref{Thm.Hardy}).  
On the other hand, the shear becomes attractive 
if~$\beta$ is positive (respectively, negative)
and~$\alpha$ is small in absolute value and negative (respectively, positive),
so~\eqref{stability} cannot hold in these cases (\cf~Theorem~\ref{Thm.disc}).     
The non-trivial part of Theorem~\ref{Thm.schema} therefore 
consists in the statement that~\eqref{stability} holds again
in the previous regimes provided that~$\alpha$ becomes large in absolute value.    

Without loss of generality, it is enough to prove the remaining claim 
for~$\beta$ positive, $\alpha$~negative and the shear function 
\begin{equation}\label{schema}
  f(x) := 
  \begin{cases}
    \beta x & \mbox{if} \quad x<0 \,,
    \\ 
    \beta x + \alpha \int_0^x \eps(s) \, \der s
    & \mbox{if} \quad 0 \leq x \leq 1 \,,
    \\
    \beta x + \alpha \int_0^1 \eps(s) \, \der s 
    & \mbox{if} \quad 1 < x \,.
  \end{cases}
\end{equation}
As above, we define~$\Omega$ by~\eqref{Omega}
and do not highlight the dependence on $\alpha$.
We use the notation $(x,y) \in \Real^2$ 
for the Cartesian coordinates in which we describe~$\Omega$.

\subsection{Subdomains decomposition}
Our strategy is to decompose~$\Omega$ into a union of three 
open subsets $\Omega_\mathrm{ext}$, $\Omega_\mathrm{ver}$, $\Omega_\mathrm{int}$
and of two connecting one-dimensional interfaces  
$\Sigma_\mathrm{ext}$, $\Sigma_\mathrm{int}$ 
on which we impose extra Neumann boundary conditions later on. 
The definition of the ($\alpha$-dependent) sets given below
is best followed by consulting Figure~\ref{figC}.

\begin{figure}[h]
\begin{center}
\includegraphics[width=0.9\textwidth]{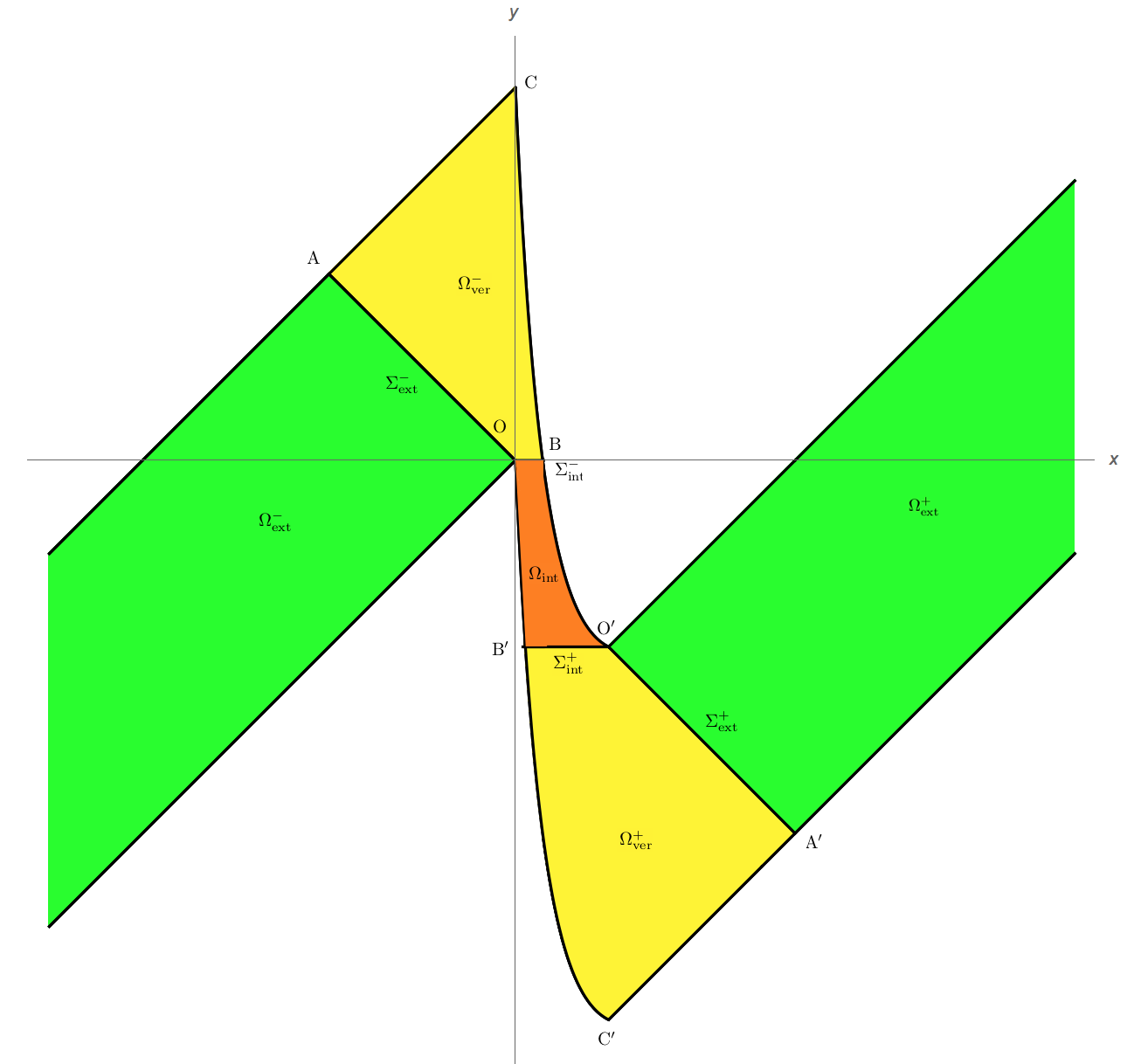}
\caption{The subdomains decomposition 
for~$\Omega$ corresponding to~\eqref{schema}.}\label{figC}
\end{center}
\end{figure}

Let $A\in\Real^2$ be the orthogonal projection of the origin $O:=(0,0) \in \Real^2$ 
on the half-line $\{(x,f(x)+d): x \leq 0\}$ 
and let $\Omega_\mathrm{ext}^-$ be the unbounded open subdomain of~$\Omega$
contained in the left half-plane and delimited by 
the open segment~$\Sigma_\mathrm{ext}^-$ connecting~$O$ with~$A$.
Similarly, let~$A'$ be the orthogonal projection of the point $O' := (1,f(1)+d)$
on the half-line $\{(x,f(x)): x \geq 1\}$ 
and let $\Omega_\mathrm{ext}^+$ be the unbounded open subdomain of~$\Omega$
contained in the right half-plane and delimited by 
the open segment~$\Sigma_\mathrm{ext}^{+}$ connecting~$O'$ with~$A'$.
Set $\Omega_\mathrm{ext} := \Omega_\mathrm{ext}^- \cup \Omega_\mathrm{ext}^+$
and $\Sigma_\mathrm{ext} := \Sigma_\mathrm{ext}^- \cup \Sigma_\mathrm{ext}^+$.

Let $x_0 \in (0,1)$ be the unique solution of the equation 
$f(x_0)+d = 0$ and set $B:=(x_0,0)$.
Similarly, let $x_0' \in (0,1)$ be the unique solution of the the equation
$f(x_0')=f(1)+d$ and set $B':=(x_0',f(1)+d)$.
Let us also introduce $C:=(0,f(0)+d)$ and $C':=(1,f(1))$.
Let~$\Omega_\mathrm{ver}^-$ (respectively, $\Omega_\mathrm{ver}^+$)
be the quadrilateral-like subdomain of~$\Omega$
determined by the vertices $O$, $A$, $C$ and $B$  
(respectively, $O'$, $A'$, $C'$ and $B'$).
Let~$\Sigma_\mathrm{int}^-$ (respectively, $\Sigma_\mathrm{int}^+$)
be the open segment connecting~$O$ with~$B$ 
(respectively, $O'$ with~$B'$).
Set $\Omega_\mathrm{ver} := \Omega_\mathrm{ver}^- \cup \Omega_\mathrm{ver}^+$
and $\Sigma_\mathrm{int} := \Sigma_\mathrm{int}^- \cup \Sigma_\mathrm{int}^+$.

Finally, we define 
$
  \Omega_\mathrm{int} := 
  \Omega\setminus(\Omega_\mathrm{ext}\cup\Omega_\mathrm{ver}
  \cup\Sigma_\mathrm{ext}\cup\Sigma_\mathrm{int})
$,
which is the parallelogram-like domain	 determined by the vertices $OBO'B'$.

\subsection{Neumann bracketing}
Notice that the majority of the sets introduced above depend on~$\alpha$,
although it is not explicitly highlighted by the notation.
In particular, the parallelogram-like domain determined by the vertices $OCO'C'$
(subset of $\Omega_\mathrm{ver} \cup \Omega_\mathrm{int}$)
converges in a sense to the half-line $\{0\} \times (-\infty,0)$
as $\alpha \to -\infty$.
Hence it is expected that this set is spectrally negligible in the limit
and the spectrum of $-\Delta_D^\Omega$ converges as $\alpha \to -\infty$
to the spectrum of the Dirichlet Laplacian in $\{(x,y)\in\Omega:x<0\}$.

Since we only need to show that the spectral threshold 
of $-\Delta_D^\Omega$ is bounded from below by~$E_1(\beta)$
for all sufficiently large~$\alpha$, 
in the following we prove less.
Instead, we impose extra Neumann boundary conditions on
$\Sigma_\mathrm{ext}\cup\Sigma_\mathrm{int}$
(\ie~no boundary condition in the form setting)
and show that the spectral threshold of the Laplacian
with combined Dirichlet and Neumann boundary conditions
in the decoupled subsets 
$\Omega_\mathrm{ext}$, $\Omega_\mathrm{ver}$, $\Omega_\mathrm{int}$
is bounded from below by~$E_1(\beta)$.
More specifically, 
by a standard Neumann bracketing argument
(\cf~\cite[Sec.~XIII.15]{RS4}),
we have 
$$
  \inf\sigma(-\Delta_D^\Omega) 
  \geq \min\left\{
  \inf\sigma(-\Delta_{DN}^{\Omega_\mathrm{ext}}),
  \inf\sigma(-\Delta_{DN}^{\Omega_\mathrm{ver}}),
  \inf\sigma(-\Delta_{DN}^{\Omega_\mathrm{int}})
  \right\}
  ,
$$ 
where $-\Delta_{DN}^{\Omega_\mathrm{ext}}$ 
(respectively, $-\Delta_{DN}^{\Omega_\mathrm{ver}}$;
respectively, $-\Delta_{DN}^{\Omega_\mathrm{int}}$)
is the operator in $\sii(\Omega_\mathrm{ext})$
(respectively, $\sii(\Omega_\mathrm{ver})$; 
respectively, $\sii(\Omega_\mathrm{int})$)
that acts as the Laplacian 
and satisfies Neumann boundary conditions on~$\Sigma_\mathrm{ext}$
(respectively, on $\Sigma_\mathrm{ext}\cup\Sigma_\mathrm{int}$;
respectively, on $\Sigma_\mathrm{int}$)
and Dirichlet boundary conditions 
on the remaining parts of the boundary.
It remains to study the spectral thresholds
of the individual operators. 

\subsection{Spectral threshold of the decomposed subsets}

\subsubsection{Exterior sets}
Notice that $\Omega_\mathrm{ext}$ consists of two connected components,
each of them being congruent 
to the half-strip $(0,\infty)\times(0,d/\sqrt{1+\beta^2})$.
Consequently, $-\Delta_{DN}^{\Omega_\mathrm{ext}}$ is isospectral to
the Laplacian in the half-strip, subject to Neumann boundary
conditions on $\{0\}\times(0,d/\sqrt{1+\beta^2})$
and Dirichlet boundary conditions on the remaining parts of the boundary.
By a separation of variables, it is straightforward to see that
$\sigma(-\Delta_{DN}^{\Omega_\mathrm{ext}})=[E_1(\beta),\infty)$.
In particular, the spectral threshold of $-\Delta_{DN}^{\Omega_\mathrm{ext}}$
equals $E_1(\beta)$ for all~$\alpha$.

\subsubsection{Interior set}
Given any $y \in [f(1)+d,0]$,
let $x_1(y)$ (respectively, $x_2(y)$) be the unique solution of the equation
$f(x_1(y))=y$ (respectively, $f(x_2(y))+d=y$).
Notice that $x_1(y) < x_2(y)$. 
In particular, $x_1(0)=0$ and $x_2(0)=x_0$,
while $x_1(f(1)+d)=x_0'$ and $x_2(f(1)+d)=f(1)+d$.
To estimate the spectral threshold of $-\Delta_{DN}^{\Omega_\mathrm{int}}$,
we write
$$
\begin{aligned}
  \int_{\Omega_\mathrm{int}} |\nabla\psi(x,y)|^2 \, \der x \, \der y
  &\geq \int_{\Omega_\mathrm{int}} |\partial_x\psi(x,y)|^2 \, \der x \, \der y
  \\
  &\geq \int_{\Omega_\mathrm{int}} 
  \left(\frac{\pi}{x_2(y)-x_1(y)}\right)^2
  |\psi(x,y)|^2 \, \der x \, \der y
  \,,
\end{aligned}
$$
where the second inequality follows by a Poincar\'e inequality
of the type~\eqref{Poincare} and Fubini's theorem.
Subtracting the equations that $x_1(y)$ and $x_2(y)$ satisfy,
the mean value theorem yields 
$d=f(x_1(y))-f(x_2(y)) = -f'(\xi) [x_2(y)-x_1(y)]$
with $0<\xi<x_0<1$.
Since $f'(x)=\beta+\alpha\eps(x) \geq \beta - c_2 |\alpha|$ for $x \in [0,1]$,
it follows that, for all $\alpha < \beta/c_2$, 
we have the uniform bound
$
  x_2(y)-x_1(y) \geq d / (c_2|\alpha|-\beta)
$.
Consequently, 
$\inf\sigma(-\Delta_{DN}^{\Omega_\mathrm{int}}) \to \infty$ 
as $\alpha \to -\infty$.
In particular, the spectral threshold of $-\Delta_{DN}^{\Omega_\mathrm{int}}$
is greater than $E_1(\beta)$ for all negative~$\alpha$ 
with sufficiently large~$|\alpha|$.

\subsubsection{Verge sets}
The decisive set~$\Omega_\mathrm{ver}$ consists of two connected 
components $\Omega_\mathrm{ver}^\pm$.
We consider $\Omega_\mathrm{ver}^-$, 
the argument for $\Omega_\mathrm{ver}^+$ being analogous.
Let us thus study the spectral threshold of 
the operator $-\Delta_{DN}^{\Omega_\mathrm{ver}^-}$ in $\sii(\Omega_\mathrm{ver}^-)$
that acts as the Laplacian and satisfies Neumann boundary conditions on
$\Sigma_\mathrm{ext}^- \cup \Sigma_\mathrm{int}^-$ 
and Dirichlet boundary conditions on the remaining parts 
of the boundary $\partial\Omega_\mathrm{ver}^-$.
Since the spectrum of this operator is purely discrete,
we are interested in the behaviour of its lowest eigenvalue.

As $\alpha \to -\infty$, the set $\Omega_\mathrm{ver}^-$
converges in a sense to the open right triangle~$T$ 
determined by the vertices $OAC$.
More specifically, $|\Omega_\mathrm{ver}^- \setminus T|=O(|\alpha|^{-1})$
as $\alpha \to -\infty$. 
Using the convergence result \cite[Thm.~29]{Barbatis-Burenkov-Lamberti_2010},
it particularly follows that the lowest eigenvalue of 
$-\Delta_{DN}^{\Omega_\mathrm{ver}^-}$
converges to the lowest eigenvalue $\lambda_1(T)$ 
of the Laplacian in~$T$,
subject to Neumann boundary conditions on the segment~$OA$
and Dirichlet boundary conditions on the other parts of the boundary. 
It remains to study the latter.

By a trivial-extension argument, 
$\lambda_1(T)$ is bounded from below by the lowest eigenvalue of 
the Laplacian in the rectangle $(0,|OA|) \times (0,|AC|)$,
subject to Neumann boundary conditions on $(0,|OA|) \times \{0\}$
and Dirichlet boundary conditions on the other parts of the boundary.
That is, 
$
  \lambda_1(T) \geq (1+\beta^{-2}) \, E_1(\beta)
$.

Summing up, we have established the result
$$
  \inf\sigma(-\Delta_{DN}^{\Omega_\mathrm{ver}})
  \xrightarrow[\alpha\to-\infty]{}
  \lambda_1(T) \geq (1+\beta^{-2}) \, E_1(\beta)
  \,.
$$
Consequently, there exists a negative~$\alpha_0$ such that,
for every $\alpha \leq \alpha_0$, the spectral threshold 
of $-\Delta_{DN}^{\Omega_\mathrm{ver}}$ is also bounded
from below by $E_1(\beta)$.
This concludes the proof of Theorem~\ref{Thm.schema}.


\subsection*{Acknowledgment}
%
D.K.\ was partially supported 
by the GACR grant No.\ 18-08835S
and by FCT (Portugal)
through project PTDC/MAT-CAL/4334/2014.

%
\bibliography{bib}
\bibliographystyle{amsplain}

\end{document}